\documentclass{article}

% if you need to pass options to natbib, use, e.g.:
%     \PassOptionsToPackage{numbers, compress}{natbib}
% before loading neurips_2024

% ready for submission
\usepackage[final]{neurips_2024}

% to compile a preprint version, e.g., for submission to arXiv, add add the
% [preprint] option:
%     \usepackage[preprint]{neurips_2024}

% to compile a camera-ready version, add the [final] option, e.g.:
%     \usepackage[final]{neurips_2024}

% to avoid loading the natbib package, add option nonatbib:
\usepackage[numbers]{natbib}

\usepackage[utf8]{inputenc} % allow utf-8 input
\usepackage[T1]{fontenc}    % use 8-bit T1 fonts
\usepackage{hyperref}       % hyperlinks
\usepackage{url}            % simple URL typesetting
\usepackage{booktabs}       % professional-quality tables
\usepackage{amsfonts}       % blackboard math symbols
\usepackage{nicefrac}       % compact symbols for 1/2, etc.
\usepackage{microtype}      % microtypography
\usepackage{xcolor}         % colors
\usepackage{microtype}
\usepackage{graphicx}
\usepackage{booktabs}
\usepackage{amsmath}
\usepackage[capitalize,noabbrev]{cleveref}
\usepackage{subcaption}
\usepackage{enumitem}

\usepackage{amssymb}
\usepackage{mathtools}
\usepackage{amsthm}

\usepackage{thm-restate}

\theoremstyle{plain}
\newtheorem{theorem}{Theorem}[section]
\newtheorem{proposition}[theorem]{Proposition}
\newtheorem{lemma}[theorem]{Lemma}

\theoremstyle{definition}
\newtheorem{definition}[theorem]{Definition}

\theoremstyle{remark}
\newtheorem{remark}[theorem]{Remark}

% Todonotes is useful during development; simply uncomment the next line
%    and comment out the line below the next line to turn off comments
%\usepackage[disable,textsize=tiny]{todonotes}
\usepackage[textsize=tiny]{todonotes}

\usepackage{algorithm}
\usepackage[noend]{algpseudocode}
\usepackage{wrapfig}

\usepackage{verbatim}

% \usepackage{thmtools}

% Todonotes is useful during development; simply uncomment the next line
%    and comment out the line below the next line to turn off comments
%\usepackage[disable,textsize=tiny]{todonotes}
\usepackage[textsize=tiny]{todonotes}

\newcommand{\mb}[1]{\mathbb{#1}}

\newcommand{\eps}{\varepsilon}

\newcommand{\problemname}{URDE}

\DeclareMathOperator*{\arginf}{arg\inf}

\title{Statistical-Computational Trade-offs for Density Estimation}

% The \author macro works with any number of authors. There are two commands
% used to separate the names and addresses of multiple authors: \And and \AND.
%
% Using \And between authors leaves it to LaTeX to determine where to break the
% lines. Using \AND forces a line break at that point. So, if LaTeX puts 3 of 4
% authors names on the first line, and the last on the second line, try using
% \AND instead of \And before the third author name.

\author{
  Anders Aamand \\
   University of Copenhagen  \\
  % Address \\
  \texttt{aamand@mit.edu} \\
  \And
  Alexandr Andoni \\
  Columbia University \\
  % Address \\
  \texttt{andoni@cs.columbia.edu} \\
  \And
  Justin Y. Chen \\
  MIT \\
  % Address \\
  \texttt{justc@mit.edu} \\
    \And
   Piotr Indyk \\
  MIT \\
  % Address \\
  \texttt{indyk@mit.edu} \\
    \And
 Shyam Narayanan \\
  Citadel Securities\thanks{Work done as a student at MIT} \\
  % Address \\
  \texttt{shyam.s.narayanan@gmail.com} \\
    \And
   Sandeep Silwal \\
  UW-Madison \\
  % Address \\
  \texttt{silwal@cs.wisc.edu} \\
      \And
   Haike Xu \\
  MIT \\
  % Address \\
  \texttt{haikexu@mit.edu}
}

\begin{document}

\maketitle

\begin{abstract}
We study the density estimation problem defined as follows: given $k$ distributions $p_1, \ldots, p_k$ over a discrete domain $[n]$,  as well as a collection of samples chosen from a ``query'' distribution $q$ over $[n]$, output $p_i$ that is ``close'' to $q$.
Recently~\cite{aamand2023data} gave the first and only known result that achieves sublinear bounds in {\em both} the sampling complexity and the query time while preserving polynomial data structure space. However, their improvement over linear samples and time is only by subpolynomial factors.

Our main result is a lower bound showing that, for a broad class of data structures, their bounds cannot be significantly improved. In particular, if an algorithm uses $O(n/\log^c k)$ samples for some constant $c>0$ and polynomial space, then the query time of the data structure must be at least $k^{1-O(1)/\log  \log  k}$, i.e., close to linear in the number of distributions $k$. This is a novel \emph{statistical-computational} trade-off for density estimation, demonstrating that any data structure must use close to a linear number of samples or take close to linear query time. The lower bound holds even in the realizable case where $q=p_i$ for some $i$, and when the distributions are flat (specifically, all distributions are uniform over half of the domain $[n]$).
We also give a simple data structure for our lower bound instance with asymptotically matching upper bounds. Experiments show that the data structure is quite efficient in practice.
\end{abstract}

\section{Introduction}
The general density estimation problem is defined as follows: given $k$ distributions $p_1, \ldots, p_k$ over a domain $[n]$\footnote{In this paper we focus on finite domains.}, build a data structure which when queried with a collection of samples chosen from a ``query'' distribution $q$ over $[n]$, outputs $p_i$ that is ``close'' to $q$. An ideal data structure reports the desired $p_i$ quickly given the samples (i.e., has fast query time), uses few samples from $q$ (i.e., has low sampling complexity) and uses little space. 

In the realizable case, we know that $q$ is equal to one of the distributions $p_j$, $1 \le j \le k$, and the goal is to identify a (potentially different) $p_i$ such that $\|q-p_i\|_1 \le \epsilon$ for an error parameter $\epsilon>0$.  In the more general agnostic case, $q$ is arbitrary and the goal is to report $p_i$ such that
\[\|q-p_i\|_1 \le C \cdot \min_j \|q- p_j\|_1 + \epsilon \]
for some constant $C>1$ and error parameter $\epsilon>0$. 
 The problem is essentially that of non-parametric learning of a distribution $q\in {\cal F}$, where the family ${\cal F}=\{p_1,\ldots p_k\}$ has no structure whatsoever. Its statistical complexity was understood already in \cite{scheffe1947useful}, and the more modern focus has been on the structured case (when $\cal F$ has additional properties). Surprisingly, its computational complexity is still not fully understood.

Due to its generality, density estimation is a fundamental problem with myriad applications in statistics and learning distributions.
%~\cite{devroye2001combinatorial, diakonikolas2016learning}.\textcolor{red}{should we beef this up with more citations?}
For example, the framework provides essentially the best possible  sampling bounds for mixtures of Gaussians~\cite{daskalakis2014faster,suresh2014spherical,diakonikolas2019robust}. The framework has also been studied in private~\cite{canonne2019structure,bun2019private,gopi2020locally,kamath2020private} and low-space ~\cite{aliakbarpour2024hypothesis} settings.

%The general density estimation problem is defined as follows: given $k$ distributions $p_1, \ldots, p_k$ over a domain $[n]$\footnote{In this paper we focus on finite domains.}, as well as a collection of samples chosen from a ``query'' distribution $q$ over $[n]$, output $p_i$ that is ``close'' to $q$. 
%In the simpler ``proper'' case, we know that $q$ is equal to one of the distributions $p_j$, $1 \le j \le k$, and the goal is to identify a (potentially different) $p_i$ such that $\|q-p_i\|_1 \le \epsilon$ for an error parameter $\epsilon>0$.  In the more general ``improper'' case, $q$ is arbitrary and the goal is to report $p_i$ such that
%\[\|q-p_i\|_1 \le C \cdot \min_j \|q- p_j\|_1 + \epsilon \]
%for some constant $C>1$ and error parameter $\epsilon>0$. 
%Due to its generality, density estimation is a fundamental problem with myriad applications in statistics and learning distributions~\cite{devroye2001combinatorial, diakonikolas2016learning}.

%A solution to this problem can be viewed as a data structure that stores $p_i$'s as well as possibly some other information, and when  given samples from the query distribution $q$, produces the output according to the above specification. An ideal data structure reports the desired $p_i$ quickly given the samples (i.e., has fast query time), uses few samples from $q$ (i.e., has low sampling complexity) and uses little space. \cref{tab:main} summarizes known results as well as our work.

\bgroup
\def\arraystretch{1.2}
\begin{table*}[h]
\footnotesize
\begin{center}
\begin{tabular}{|c|c|c|c|c|}%{|p{1.6cm}|p{2.1cm}|p{1.4cm}|p{4.0cm}|p{6cm}|}
\hline
Samples & Query time & Space       & Comment       & Reference \\
\hline 
$\log k$    & $k^2 \log k$  & $kn$              &               & \cite{scheffe1947useful,devroye2001combinatorial}\\
$\log k$    & $k \log k$    & $kn$              &               & \cite{acharya2018maximum}\\
$\log k$    & $k$           & $kn$              &               & \cite{aamand2023data} \\
$\log k$    & $\log k$      & $n^{O(\log k/\epsilon^2)}$      & precompute all possible samples   & folklore \\
$n$         & $nk^\rho$      &$kn + k^{1+\rho}$ & any constant $\rho>0$  & \cite{kamath2015learning}$+$\cite{indyk1998approximate}\\
$\frac{n}{\log(k)^{1/4}}$ & $n+k^{1-\frac{1}{\log(k)^{1/4}}}$ &  $k^2n$ & & \cite{aamand2023data}\\
%% WARNING: READ THE CAPTION BEFORE CHANGING THE ABOVE LINE. THE TABLE SHOULD BE CONSISTENT WITH THE TEXT IN THE INTRO.
\hline
$n/s$       & $k^{1-O( \rho_u)/\log  s}$ & $k^{1+\rho_u}$ & {\em lower bound for any $\rho_u> 0$, sufficiently large $s$} & this paper\\
$n/s$       & $k^{1-\Omega( \rho_u)/\log  s }$ & $kn+k^{1+\rho_u}$ & algorithm for half-uniform distributions & this paper\\
\hline
\end{tabular}
\normalsize
\caption{Prior work and our results. For simplicity the results stated only for the realizable case, constant $\epsilon>0$, and with $O(\cdot)$ factors suppressed. 
The bound of~\cite{aamand2023data} (row 6 of the table) is stated as in Theorem 3.1 of that paper. However, by adjusting the free parameters,  their algorithm  can be easily generalized to use $n/s$ samples for $n/s > n/\text{polylog}(n)$, resulting in a query time of $O(n+k^{1-\Omega(\epsilon^2)/s})$. 
Note that the term $1-1/s$ in their bound results  in a larger exponent than $1-1/\log s$ in our upper bound. Furthermore, our algorithm is correct as long as $n/s\gg \log k/\eps^2$ which is the information theoretic lower bound. }\label{tab:main}
\end{center}
\end{table*}
\egroup

\cref{tab:main} summarizes known results as well as our work.  As seen in the table, the data structures are subject to statistical-computational trade-offs. 
On one hand, if the query time is not a concern, logarithmic in $k$ samples are sufficient~\cite{scheffe1947useful,devroye2001combinatorial}. On the other hand, if the sampling complexity is not an issue, then one can use the algorithm of~\cite{kamath2015learning} to learn the distribution $\hat{q}$ such that $\|\hat{q}-q\|_1 \le \epsilon/2$, and then deploy standard approximate nearest neighbor search algorithms with sublinear query time, e.g., from \cite{indyk1998approximate}. 
Unfortunately  both of these extremes require either linear (in $n$) sample complexity, or linear (in $k$) query time time. Thus, achieving the best performance with respect to one metric resulted in the worst possible performance on the other metric.

\begin{figure*}
    \centering
    \includegraphics[width=6.75cm]{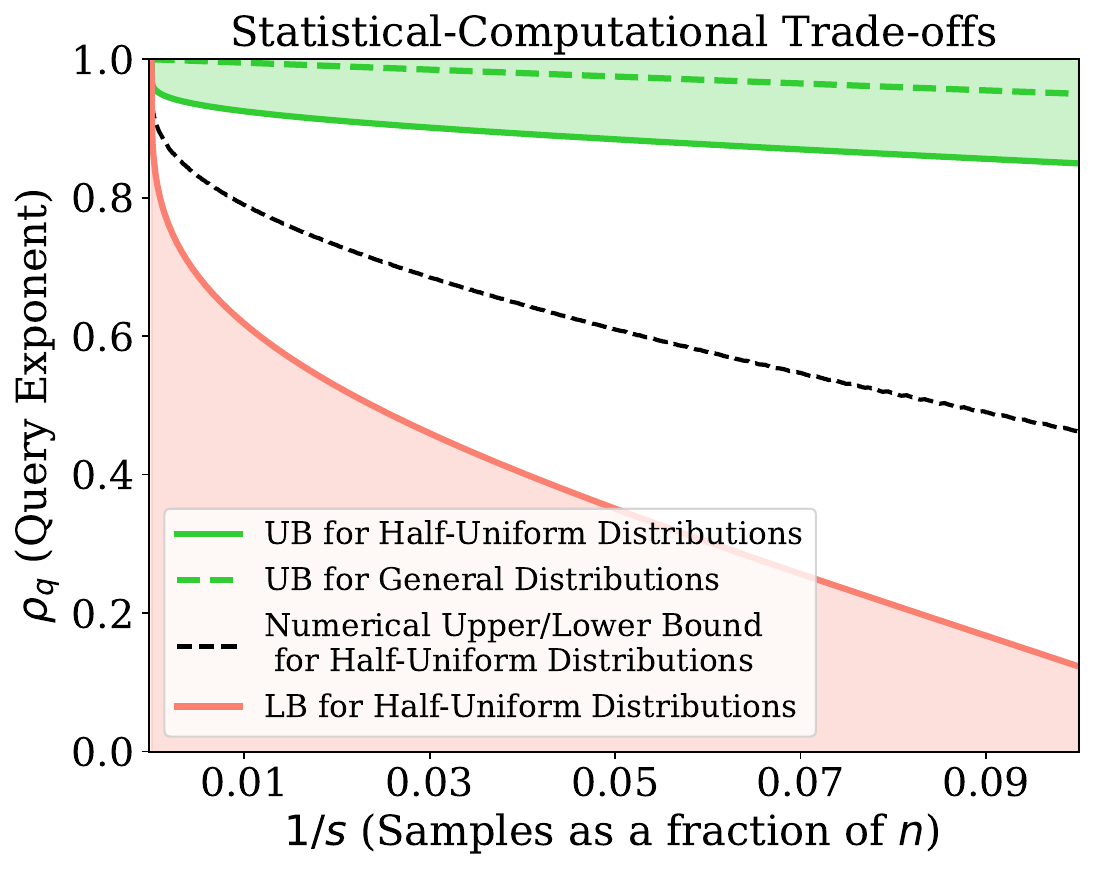}
    \includegraphics[width=6.75cm]{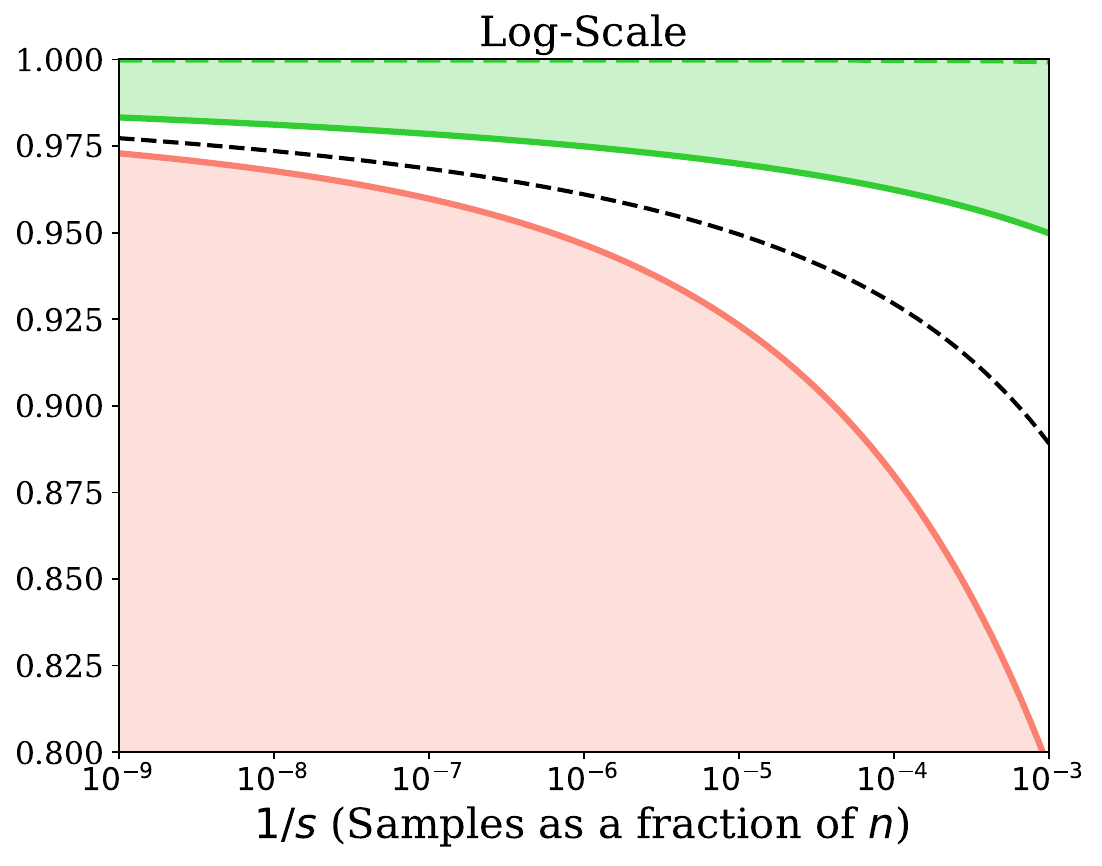}
    \caption{\emph{Left:} Trade-off between $1/s$ (samples as a fraction of $n$) and the query time exponent $\rho_q$ for our algorithm for half-uniform distributions (solid green curve), the algorithm by~\cite{aamand2023data} for general distributions (dashed green curve), our analytic lower bound (solid red curve), and a numerical evaluation of the bound from~\cref{thm:supermajority_lowerbound} (dashed black curve). We have fixed the space parameter $\rho_u=1/2$. The plots illustrate the asymptotic behaviour proven in~\cref{thm:lower_bound_FNNS} and~\cref{thm:main-upper} that as $s\to \infty$, $\rho_q=1-\Theta(1/\log s)$ both in the lower bound and for our algorithm for half-uniform distributions. \emph{Right:} The same plot zoomed in to the upper left corner with $1/s$ on log-scale.}
    \label{fig:tradeoff}
\end{figure*}

The first and only known result  that obtained non-trivial improvements to  {\em both} the sampling complexity and the query time is due to a recent work of \cite{aamand2023data}.  Their improvements, however, were quite subtle: the sample complexity was improved by a sub-logarithmic factor, while the query time was improved by a sub-polynomial factor of $k^{1/(\log k)^{1/4}} = 2^{\log(k)^{3/4}}$.

This result raises the question of whether further improvements are possible. In particular, \cite{aamand2023data} asks: {\em To what extent can our upper bounds of query and sample complexity be improved?
 What are the computational-statistical tradeoffs between the sample complexity and query time?} These are the questions that we address in this paper. 

\begin{itemize}[leftmargin=*]
    \item  {\bf Lower bound:} We give the \textbf{first} limit to the best possible tradeoff between the query time and sampling complexity, demonstrating a novel statistical-computational tradeoff for a fundamental problem.
    To our knowledge, this is the first statistical-computational trade-off for a \emph{data structures} problem--if we allow for superpolynomial space, logarithmic sampling and query complexity is possible.
    As in \cite{andoni2017optimal,andoni2017approximate,christiani2017set,ahle2020subsets}, we focus on data structures that operate in the so-called {\em list-of-points model}\footnote{See \cref{sec:prelim} for the formal definition. Generally, proving data structure lower bounds requires restriction to a specific model of computation, and the list-of-points model is a standard choice for related approximate nearest neighbor problems.}, which  captures  all bounds depicted in Table~\ref{tab:main}. 
    % Certain restrictions apply :)
    % call your doctor if the upper bounds are right for you
    Suppose that the query time of the data structure is of the form $k^{\rho_q}$. We show that, if the data structure is allowed polynomial space $kn+k^{1+\rho_u}$ and uses $n/s$ samples, then the query time exponent $\rho_q$ must be at least $1-O(\rho_u)/\log  s$.  Therefore, if we set $s=\log^c k$ for some $c>0$, as in \cite{aamand2023data}, then the query time of the data structure must be at at least $k^{1-O(1)/\log  \log  k}$. That is, the query time can be improved over  linear time in $k$ by at most a factor of $k^{O(1)/\log  \log  k} = 2^{O(\log k/\log \log k)}$. This shows that it is indeed not possible to improve the linear query time by a polynomial factor while keeping the sampling complexity below $n/\log^c n$, for any constant $c>0$. 
    %In other words, we conclude that for this sampling regime and for our data structure model, the speedup over linear query time can range between $2^{O(\log(k)^{3/4})}$ (an upper bound in \cite{aamand2023data}) and $2^{O(\log( k)/\log \log k)}$ (shown here).
    %Thus, qualitatively the bound (*) is tight, although its query time speedup of %2^(log k)^3/4 is still smaller than the largest possible speedup of  $2^O(log k)/(log log k). 
    
Our lower bound instance falls within the realizable case and in the restricted setting where the data and query distributions are ``half-uniform'', i.e. each distribution is uniform over half of the universe $[n]$. Note that showing a lower bound under these restrictions automatically extends to the agnostic case with general distributions, as the former is a special case of the latter. 

Our construction takes lower bounds by~\cite{ahle2020subsets} for set similarity search as a starting point. We adapt their lower bound to our setting in which queries are formed via samples from a distribution. The resulting lower bound is expressed via a complicated optimization problem and does not yield a closed-form statistical-computational trade-off. One of our technical contributions is solve this optimization problem for a regime of interest to get our explicit lower bound.

\item {\bf Upper bound:} We complement the lower bound by demonstrating a data structure for our hard instances (in the realizable case with half-uniform distributions) achieving sampling and query time bounds asymptotically matching those of the lower bound. %As we will see, these half-uniform distributions in particular capture the hard instance for the lower bound described above. 
We note that the existence of such an algorithm essentially follows from ~\cite{ahle2020subsets}. However, the algorithms presented there are quite complex. In contrast, our algorithm can be viewed as a ``bare-bones'' version of their approach, and as a result it is simple and easy to implement. To demonstrate the last point, we present an empirical evaluation of the algorithm on the synthetic data set from~\cite{aamand2023data}, and compare it to the algorithm from that paper, as well as a baseline tailored to half-uniform distributions. Our faster algorithm achieves over $\mathbf{6\times}$ reduction in the number of operations needed to correctly answer $100$ random queries. 

In~\cref{fig:tradeoff}, we illustrate the trade-off between the number of samples and the query time exponent $\rho_q$ in our upper and lower bounds.

\item {\bf Open Questions:} The direct question left open by our work is whether there exists a data structure whose upper bounds for general distributions match our lower bounds (note we give matching upper bounds for half-uniform distributions). 
\cite{aamand2023data} give an algorithm for the general case, but with a worse trade-off than that described by our lower bound.
More generally, are there other data structures problems for which one can show statistical-computational tradeoffs between the trifecta of samples, query time, and space?
\end{itemize}

\section{Preliminaries and Roadmap for the Lower Bound}\label{sec:prelim}

First we introduce helpful notation used throughout the paper.

\paragraph{Notation:}
We use $\mathrm{Bern}(p)$ to denote the $\mathrm{Bernoulli}(p)$ distribution and $\mathrm{Poi}(\lambda)$ to denote the $\mathrm{Poisson}(\lambda)$ distribution. For a discrete distribution $f:X\to \mb{R}$, we use $supp(f)=\{x\in X: f(x)\neq 0\}$ to denote $f$'s support and $|supp(f)|$ to denote the size of its support. We use $f^n$ to denote the tensor product of $n$ identical distribution $f$. We call a distribution $f$ half-uniform if it is a uniform distribution on its support $T$ with $|T|=n/2$. For a binary distribution $P$ supported on $\{0,1\}$ with a random variable $x\sim P$, we sometimes explicitly write $P=
\begin{bmatrix}
  \mb{P}[x=1] \\
  \mb{P}[x=0] \\
\end{bmatrix}$.
Similarly, for a joint distribution $PQ$ over $\{0,1\}^2$ with $(x,y)\sim PQ$, we write $PQ=
\begin{bmatrix}
  \mb{P}[x=1,y=1] & \mb{P}[x=1,y=0] \\
  \mb{P}[x=0,y=1] & \mb{P}[x=0,y=0] \\
\end{bmatrix}.
$

For a vector $x\in \mb{R}^n$, we use $x[i]$ to denote its $i$-th coordinate. We use $d(p||q)=p\log\frac{p}{q}+(1-p)\log\frac{1-p}{1-q}$ and $D(P||Q)=\sum_{p\in P}p\log\frac{p}{q}$ to denote the standard KL-divergence over a binary distribution or a general discrete distribution. KL divergence $D(P||Q)$ is only finite when $supp(P)\subseteq supp(Q)$, also denoted as $P\ll Q$. All logarithms are natural. 

We now introduce the main problem which we use to prove our statistical-computational lower bound. We state a version which generalizes half-uniform distributions.

\begin{definition}[Uniform random density estimation problem]
\label{def:flattened_nearest_neighbor_instance} 
For a universe $U=\{0,1\}^{n}$, we generate the following problem instance:
\begin{enumerate}[leftmargin=*]
\item A dataset $P$ is constructed by sampling $k$ uniform distributions,
%from $\mathrm{Bern}(w_u)^U$.
where for each uniform distribution $p\in P$, every element $i\in [n]$ is contained in $p$'s support with probability $w_u$.
%support size $|supp(p)|\in (w_u\pm 0.01)\cdot n$.
%$$(w_u-0.01)\cdot n\le |supp(p)|\le (w_u+0.01)\cdot n$$.
\item Fix a distribution $p^*\in P$, take $\mathrm{Poi}\left(\frac{|supp(p^*)|}{s\cdot w_u}\right)$ samples from $p^*$ and get a query set $q$. 
%by removing repeated elements from the samples.
\item The goal of the data structure is to preprocess $P$ such that when given the query set $q$, it recovers the distribution $p^*$.
\end{enumerate}
\end{definition}

We denote this problem as $\mathrm{\problemname}(w_u,s)$. $\mathrm{\problemname}$ abbreviates \emph{Uniform Random Density Estimation}. The name comes from the fact that the data set distributions are uniform over randomly generated supports. In Section \ref{sec:lower_bounds}, we prove a lower bound for \problemname~  by showing that a previously studied `\emph{hard}' problem can be reduced to \problemname. The previously studied hard problem is the $\mathrm{GapSS}$ problem.

\begin{definition}[Random $\mathrm{GapSS}$ problem~\cite{ahle2020subsets}]\label{def:supermajority_instance}
For a universe $U=\{0,1\}^{n}$ and parameters $0<w_q<w_u<1$, let distribution $P_U=\mathrm{Bern}(w_u)^{n}$, $P_Q=\mathrm{Bern}(w_q)^{n}$, and $P_{QU}=\tiny \begin{bmatrix}
  w_q & 0 \\
  w_u-w_q & 1-w_u \\
\end{bmatrix}^{n}$. A random $\mathrm{GapSS}(w_u,w_q)$ problem is generated by the following steps:
\begin{enumerate}[leftmargin=*]
\item A dataset $P\subseteq U$ is constructed by sampling $k$ points where $p\sim P_{U}$ for all $p\in P$.
\item A dataset point $p^*\in P$ is fixed and a query point $q$ is sampled such that $(q,p^*)\sim P_{QU}$.
\item The goal of the data structure is to preprocess $P$ such that it recovers $p^*$ when given the query point $q$.
\end{enumerate}
We denote this problem as random $\mathrm{GapSS}(w_u,w_q)$. $\mathrm{GapSS}$ abbreviates \emph{Gap Subset Search}. To provide some intuition about how $\mathrm{GapSS}$ relates to \problemname, let us denote the data set $P=\{p_1,\dots,p_k\}$. Then the $p_i\in \{0,1\}^n$ can naturally be viewed as $k$ independently generated random subsets of $[n]$. For each $i$, $p_i$ includes each element of $[n]$ with probability $w_u$. The query point $q$ can similarly be viewed as a random subset of $[n]$ including each element with probability $w_q$, but it is correlated with some fixed $p^*\in P$. Namely, $p^*$ and $q$ are generated according to the join distribution $P_{QU}$ (with the right marginal distributions $P_Q$ and $P_U$) such $q$ a subset of $p^*$. The goal in $\mathrm{GapSS}$ is to identify $p^*$ given $q$. This intuition is formalized in Section \ref{sec:lower_bounds}. 
\end{definition}

% \begin{definition}[Nearest Flat Distribution Search problem]

% \xnote{The problem definition here is slightly different from the problem our algorithm solves in the next section. Maybe we need to redefine the exact problem our algorithm solves in the next section and also comment that our algorithm can also be modified to solve this problem definition.}

Our main goal is to study the asymptotic behavior of algorithms with sublinear samples, or specifically, the query time and memory trade-off when only sublinear samples are available, so all our theorems assume the setting that both the support size $n$ and the number of samples $k$ goes to infinity and $n\ll k\le \mathrm{poly}(n)$. Sublinear samples mean that $\frac{1}{s}<o(1)$ as $n$ goes to infinity.

% \textcolor{red}{We need to define the list of points model}
% \xnote{I copied the definition from the supermajority paper but its statement seems a little bit unclear for me. For example, what is $s_1,s_2$ and what problem does it want to solve?}

Our lower bound extend and generalize lower bounds for $\mathrm{GapSS}$ in the `List-of-points' model. Thus, the lower bound we provide for \problemname~is also in the ``List-of-points'' model defined below (slightly adjusted from the original definition in~\cite{andoni2017optimal} to our setting). The model captures a large class of data structures for retrieval problems such as partial match and nearest neighbor search: where one preprocesses a dataset $P$ to  answer queries $q$ that can ``match'' a point in the dataset.

\begin{definition}[List-of-points model] 
Fix a universe $Q$ of queries, a universe $U$ of dataset points, as well as a partial predicate 
$\phi:Q\times S\to \{0,1,\bot\}$. 
We first define the following $\phi$-retrieval problem: preprocess a dataset $P\subseteq U$ so that given a query $q\in Q$ such that there exist some $p^*\in P$ with $\phi(q,p^*)=1$ and $\phi(q,p)=0$ on all $p\in P\setminus\{p^*\}$, we must report $p^*$.

Then a list-of-points data structure solving the above problem is as follows:
\begin{enumerate}[leftmargin=*]
\item We fix (possibly random) sets $A_i\subseteq U$, for $1\le i\le m$; and with each possible query point $q\in Q$, we associate a (random) set of indices $I(q)\subseteq [m]$;
\item For the given dataset $P\subset U$, we maintain $m$ lists of points $L_1,L_2,...,L_m$, where $L_i=P\cap A_i$.
\item On query $q\in Q$, we scan through  lists $L_i$ where $i\in I(q)$, and check whether there exists some $p\in L_i$ with $\phi(q,p)=1$. If it exists, return $p$.
\end{enumerate}
The data structure succeeds, for a given $q\in Q$, $p^*\in P$ with $\phi(q,p^*)=1$, if there exists $i\in I(q)$ such that $p^*\in L_i$. The total space is defined by $S=m+\sum_{i\in[m]}|L_i|$ and the query time by $T=|I(q)|+\sum_{i\in I(q)}|L_i|$.
\end{definition}

To see how the lower bound model relates to \problemname, in our setting, the $`\phi$-retrieval problem' is the \problemname~problem: $U$ is the set of random half-uniform distributions, $Q$ is the family of query samples, and $\phi(q,p)$  is 1 if the samples $q$ were drawn from the distribution $p$, and 0 otherwise. (The $\bot$ case corresponds to an ``approximate'' answer, considering by the earlier papers; but we define \problemname~problem directly to not have approximate solutions.)

We use the list-of-points model as it captures all known ``data-independent'' similarity search data structures, such as Locality-Sensitive Hashing~\cite{indyk1998approximate}. In principle, a lower bound against this model does not rule out {\em data-dependent} hashing approaches. However, these have been useful only for datasets which are not chosen at random. In particular,  \cite{andoni2017optimal} conjecture that data-dependency doesn't help on random instances, which is the setting of our theorems.

\section{Lower bounds for random half-uniform density estimation problem}\label{sec:lower_bounds}

In this section, we formalize our lower bound. The main theorem of the section is the following.
\begin{theorem}[Lower bound for $\mathrm{\problemname}$]\label{thm:lower_bound_FNNS}
If a list-of-points data structure solves the $\mathrm{\problemname}\left(\frac12,s\right)$ 
% with $s\to \infty$ (i.e. the expected number of samples the algorithm uses $\frac{n}{s}$ is sublinear in the support size) 
using time $O(k^{\rho_q})$ and space $O(k^{1+\rho_u})$, and succeeds with probability at least $0.99$, then for sufficiently large $s$,
$\rho_q\ge 1-\frac{1}{s^{1-\log  2-o(1)}}-\frac{\rho_u}{\log  s-1}$.\end{theorem}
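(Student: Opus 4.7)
The plan is to reduce random $\mathrm{GapSS}$ to $\mathrm{URDE}$ via Poissonization, inherit the Ahle--Knudsen--Thorup list-of-points lower bound for $\mathrm{GapSS}$, and then solve the resulting KL-divergence optimization in the asymptotic sublinear-sample regime. The reduction relies on Poisson thinning: if $p^*$ is the planted uniform distribution on a random support $S^*\subseteq[n]$ with $|S^*|\approx n/2$, then drawing $\mathrm{Poi}(n/s)$ samples from $p^*$ gives independent multiplicities $\mathrm{Poi}(2/s)$ on each $i\in S^*$. Letting $\tilde q\in\{0,1\}^n$ be the indicator vector of the sample support, we get $\Pr[\tilde q[i]=1\mid i\in S^*]=1-e^{-2/s}=:w_q$ and $\Pr[\tilde q[i]=1\mid i\notin S^*]=0$, which matches the joint distribution $P_{QU}$ of \cref{def:supermajority_instance} at parameters $(w_u,w_q)=(\tfrac12,1-e^{-2/s})$. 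Hence a list-of-points data structure for $\mathrm{URDE}(\tfrac12,s)$ induces, by passing to the support of the query, a list-of-points data structure for random $\mathrm{GapSS}(\tfrac12,1-e^{-2/s})$ with identical space and time parameters.

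Then I would invoke the Ahle--Knudsen--Thorup lower bound for random $\mathrm{GapSS}$: for any list-of-points data structure of space $O(k^{1+\rho_u})$ and query time $O(k^{\rho_q})$ succeeding with constant probability, and any joint distribution $\tilde P_{QU}$ on $\{0,1\}^2$ with $\tilde P_{QU}\ll P_{QU}$, one has an inequality of the form
\[
\rho_q\ \ge\ 1\ -\ \frac{D(\tilde P_{QU}\,\|\,P_Q\times P_U)}{D(\tilde P_{QU}\,\|\,P_{QU})}\ -\ \rho_u\cdot\frac{D(\tilde P_U\,\|\,P_U)}{D(\tilde P_{QU}\,\|\,P_{QU})},
\]
where $\tilde P_U$ is the $U$-marginal of $\tilde P_{QU}$. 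Taking the supremum over valid $\tilde P_{QU}$ yields the tightest lower bound. Because $P_{QU}[1,0]=0$, the constraint $\tilde P_{QU}\ll P_{QU}$ forces that entry to zero as well, reducing the optimization to a single free parameter $\alpha:=\tilde P_{QU}[1,1]$ once the $Q$- and $U$-marginals are fixed to $\mathrm{Bern}(w_q)$ and $\mathrm{Bern}(\tfrac12)$.

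The main technical task is to evaluate this one-parameter supremum in the regime $w_q=1-e^{-2/s}=2/s-O(1/s^2)$ and to extract the claimed asymptotics. I would Taylor-expand $w_q$ together with the four binary KL terms, collect the leading $\log s$ contribution of the denominator $D(\tilde P_{QU}\,\|\,P_{QU})\asymp \alpha\log(1/w_q)\asymp \alpha\log s$, and collect the numerator terms that pick up an extra $\log 2$ factor from the uniform $w_u=\tfrac12$ marginal (moving mass onto $(0,1)$ costs a factor $\tfrac12$ under $P_Q\times P_U$). Differentiating in $\alpha$ and solving should yield an interior optimizer $\alpha^*\asymp s^{-(1-\log 2)}$, and substituting back gives $1-\rho_q\le s^{-(1-\log 2 - o(1))}+\rho_u/(\log s-1)$, as claimed; the $\rho_u$-coefficient $1/(\log s-1)$ falls out cleanly because at the optimum $D(\tilde P_U\,\|\,P_U)/D(\tilde P_{QU}\,\|\,P_{QU})\to 1/(\log s-1)$, reflecting that the joint KL picks up one extra $\log s$ factor over the $U$-marginal KL.

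The main obstacle is this calculus step: isolating the constant $\log 2$ in the exponent requires tracking subleading $O(1)$ terms in the KL expansions carefully and verifying that the optimum sits in the interior of $\alpha\in[0,w_q]$ rather than at a boundary, where the ratio would degenerate. A secondary check is that the Poissonization reduction in Step~1 is essentially lossless, i.e. that passing from the sample multiset to its support does not discard useful information; for $s\to\infty$ this should follow from a standard data-processing argument since almost all nonzero multiplicities equal $1$. The success-probability threshold of $0.99$ only affects constants via the usual amplification/indistinguishability step in the Ahle--Knudsen--Thorup framework and is absorbed into the $o(1)$.
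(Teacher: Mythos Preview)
Your reduction step is essentially the paper's \cref{thm:reduction}, with two small corrections. First, since $P_{QU}[1,1]=w_q$ is an \emph{unconditional} probability, the correct value is $w_q=\tfrac12(1-e^{-2/s})$, not $1-e^{-2/s}$; this only shifts constants. Second, the phrase ``passing to the support of the query'' is backwards: to use a URDE data structure for GapSS you must go the other way, converting the GapSS indicator $q_1$ into a sample multiset by attaching $\mathrm{Poi}_+(\tfrac{1}{sw_u})$ multiplicities to each $1$-coordinate (this is exactly the paper's construction). The distributional identity you state is the right one, so this is a matter of phrasing.

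The genuine gap is in how you invoke and solve the GapSS lower bound. The inequality you display is \emph{not} the form of \cref{thm:supermajority_lowerbound}: there, for each $\alpha\in[0,1]$ one has $\alpha\rho_q+(1-\alpha)\rho_u\ge \inf_{t_q,t_u}F(t_q,t_u)$, i.e.\ the prover chooses the scalar $\alpha$ but the test distribution $T$ (with marginals $t_q,t_u$) is \emph{minimized} over, not maximized. Your formulation with a free $\tilde P_{QU}$ and a supremum has the quantifiers reversed. Relatedly, your parametrization is inconsistent: with $\tilde P_{QU}[1,0]=0$ and both marginals pinned to $\mathrm{Bern}(w_q)$ and $\mathrm{Bern}(\tfrac12)$, the $[1,1]$ entry is forced to equal $w_q$, so no free parameter remains, and moreover $D(\tilde P_U\|P_U)=0$ would kill the $\rho_u$ term altogether. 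Your claimed interior optimizer $\alpha^*\asymp s^{-(1-\log 2)}$ therefore does not correspond to anything in the actual bound.

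The paper's route at this point is quite different. It sets the convex-combination parameter $\alpha=1+1/\log w_q$ (so $\alpha\to 1^-$) and then lower-bounds $\inf_{t_q,t_u}F$ by a case analysis on $t_q$: at interior critical points it uses the first-order condition \eqref{eq:partial_derivative} to eliminate $t_u$, and then splits into three regimes of $t_q$ relative to $w_q$ (\cref{lem:case-1,lem:case-2,lem:case-3}), together with a separate check of the boundary $t_q=0$ (\cref{lem:endpoints}). The constant $\log 2$ in the exponent arises from the regime where $t_q$ is far from $w_q$ and $|t_u-\tfrac12|=\Theta(1)$, and the $\rho_u/(\log s-1)$ coefficient comes directly from the choice of $\alpha$ via $\tfrac{1-\alpha}{\alpha}=\tfrac{-1}{1+\log w_q}$. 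To turn your sketch into a proof you would need to replace the supremum-over-$\tilde P$ picture by this infimum analysis.
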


To prove Theorem \ref{thm:lower_bound_FNNS}, our starting point is the following result of \cite{ahle2020subsets} that provides a lower-bound for the random GapSS problem.

\begin{theorem}[Lower bound for random GapSS, \cite{ahle2020subsets}]\label{thm:supermajority_lowerbound}
Consider any list-of-points data structure for solving the random $\mathrm{GapSS}(w_u,w_q)$ problem on $k$ points, which uses expected space $O(k^{1+\rho_u})$, has expected query time $O(k^{\rho_q-o_k(1)})$
% \todo{Does $o_k(1)$ mean a term that converges to $0$ as $k \to \infty$? If so that should be clarified.}\xnote{stated in preliminary}
, and succeeds with probability at least $0.99$. Then for every $\alpha\in[0,1]$, we have that 
\[\alpha\rho_q+(1-\alpha)\rho_u\ge \inf\limits_{\substack{t_q,t_u\in[0,1]\\ t_u\neq w_u}} F(t_u,t_q),\]
where  $F(t_u,t_q)=\alpha\frac{D(T||P)-d(t_q||w_q)}{d\left(t_u||w_u\right)}+(1-\alpha)\frac{D(T||P)-d(t_u||w_u)}{d(t_u||w_u)}$, $P= \tiny\begin{bmatrix} w_q & 0 \\ w_u-w_q & 1-w_u \\ \end{bmatrix}$
%$T$ is equal to the distribution minimizing $D(T || P)$ subject to $T \ll P$ and $\displaystyle \mathop{\mathbb{E}}_{X \sim T}[X] = \tiny\begin{bmatrix}\scriptscriptstyle t_q\\ \scriptscriptstyle t_u\end{bmatrix}$.
and $T=\arginf\limits_{\substack{T \ll P \\ \scriptscriptstyle \mathop{\mathbb{E}}_{X \sim T}[X] = \tiny\begin{bmatrix}\scriptscriptstyle t_q\\ \scriptscriptstyle t_u\end{bmatrix}}}{D(T || P)}$.
\end{theorem}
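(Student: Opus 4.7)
My plan is to follow the list-of-points lower bound framework of \cite{andoni2017optimal} (building on earlier LSH lower bounds of O'Donnell--Wu--Zhou), specialized to the $n$-fold product structure of $P=P_{QU}$ on $\{0,1\}^2$. First I set up notation: for any scheme with buckets $A_1,\ldots,A_m\subseteq\{0,1\}^n$ and probe sets $I(q)\subseteq[m]$, let $\alpha_i^u = \Pr_{p\sim P_U}[p\in A_i]$, $\alpha_i^q = \Pr_{q\sim P_Q}[i\in I(q)]$, and $\beta_i = \Pr_{(q,p^*)\sim P}[p^*\in A_i,\, i\in I(q)]$. The space bound gives $\sum_i \alpha_i^u = O(k^{\rho_u})$; the query-time bound, whose dominant term comes from spurious hits by the $k-1$ i.i.d.\ data points that are independent of $q$, gives $\sum_i \alpha_i^q\,\alpha_i^u = O(k^{\rho_q - 1})$; and the $0.99$ success requirement forces $\sum_i \beta_i = \Omega(1)$ by a union bound.

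Second, I reduce to ``type-structured'' buckets. Since $P_U$, $P_Q$, and $P_{QU}$ are invariant under coordinate permutations, averaging the scheme over the $n!$ permutations of $[n]$ preserves all three expectations above without changing success probability. After this symmetrization (and a further decomposition over coordinate-type classes), we may assume each bucket and each probe-trigger $\{q:i\in I(q)\}$ depends on its input only through the empirical $1$-frequency; each bucket is then characterized by a pair $(t_u,t_q)\in[0,1]^2$, and Sanov's theorem applied to the product $P$ on $\{0,1\}^2$ yields the sharp large-deviation estimates
\[\alpha_i^u \asymp e^{-n\,d(t_u\|w_u)},\quad \alpha_i^q \asymp e^{-n\,d(t_q\|w_q)},\quad \beta_i \asymp e^{-n\,D(T\|P)},\]
where $T$ is exactly the minimum-KL joint with marginals $(t_q,t_u)$ defined in the statement.

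Third, the constraints combine cleanly. Some type block must contribute $\Omega(1)$ to $\sum_i \beta_i$, so for that type $m_{t_u,t_q}\gtrsim e^{n D(T\|P)}$. The space constraint restricted to this block gives $m_{t_u,t_q}\,e^{-n d(t_u\|w_u)}\le k^{\rho_u}$; the query-time constraint gives $m_{t_u,t_q}\,k\,e^{-n(d(t_q\|w_q)+d(t_u\|w_u))}\le k^{\rho_q}$. Dividing by the natural scale $n\,d(t_u\|w_u)$ (the order of $\log k$ at the optimal working point, with any discrepancy absorbed into the $o_k(1)$ slack in the hypothesis) gives $\rho_u \ge \tfrac{D(T\|P)-d(t_u\|w_u)}{d(t_u\|w_u)}$ and $\rho_q \ge 1 + \tfrac{D(T\|P)-d(t_q\|w_q)-d(t_u\|w_u)}{d(t_u\|w_u)} = \tfrac{D(T\|P)-d(t_q\|w_q)}{d(t_u\|w_u)}$. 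The $\alpha$-weighted combination of these two is exactly $F(t_u,t_q)$; since the ``dominant type'' was not our choice, taking the infimum over $(t_u,t_q)$ with $t_u\neq w_u$ (the excluded case corresponds to a trivial bucket that does not filter data at all) completes the bound.

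The main obstacle I expect is the rigorous type-decomposition step. An arbitrary symmetrized bucket can still pool together probe-triggers of many different $(t_u,t_q)$-types, so one must either (i) partition the $m$ buckets over a polynomial-size $\eps$-net of types and pigeonhole to a dominant type that individually satisfies all three expectation constraints up to $k^{o(1)}$ losses, or (ii) invoke a convexity argument showing that a mixture of types cannot outperform the best pure type. Both approaches are standard in the Andoni--Razenshteyn framework, but they must be executed carefully for the asymmetric joint $P_{QU}$ of random GapSS (which is not the uniform Hamming cube), and this is the technical heart of \cite{ahle2020subsets}.
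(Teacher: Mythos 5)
You should first note that the paper does not prove \cref{thm:supermajority_lowerbound} at all: it is imported verbatim from \cite{ahle2020subsets} and used as a black box (the paper's own contributions are the reduction in \cref{thm:reduction} and the explicit solution of the optimization in \cref{thm:explicit_gapss_lowerbound}). So there is no in-paper proof to compare against, and what you have written is a reconstruction of the argument in the cited work. Your counting skeleton is the right one and matches how these bounds are actually derived: the space constraint $\sum_i\alpha_i^u=O(k^{\rho_u})$, the spurious-collision bound $\sum_i\alpha_i^q\alpha_i^u=O(k^{\rho_q-1})$, the union bound $\sum_i\beta_i=\Omega(1)$, and the final algebra converting the three exponents into $\rho_u\ge\frac{D(T\|P)-d(t_u\|w_u)}{d(t_u\|w_u)}$ and $\rho_q\ge\frac{D(T\|P)-d(t_q\|w_q)}{d(t_u\|w_u)}$ are all correct and do recombine into $F(t_u,t_q)$.

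The genuine gap is the second step, and you have correctly identified its location but not its nature. Averaging the scheme over the $n!$ coordinate permutations makes the \emph{distribution} of the data structure permutation-invariant; it does not make any individual bucket $A_i$ a permutation-invariant subset of $\{0,1\}^n$, so you cannot conclude that a bucket ``depends on its input only through the empirical $1$-frequency.'' An adversarial bucket is an arbitrary subset of the cube, and for such a set the joint collision probability $\beta_i=\Pr_{(q,p^*)\sim P_{QU}}[p^*\in A_i,\,i\in I(q)]$ is not governed by Sanov's theorem applied to a single type class. Neither of your proposed fixes repairs this: both an $\eps$-net over types and a convexity-over-mixtures argument presuppose that buckets are unions of type classes, which is exactly what fails. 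The actual proof in \cite{ahle2020subsets} replaces this step with a functional-analytic inequality: a hypercontractive (or two-function moment) bound for the directed noise operator associated with the asymmetric joint distribution $P_{QU}$, which upper-bounds $\beta_i$ for \emph{arbitrary} sets $A_i$ and trigger sets $\{q:i\in I(q)\}$ in terms of $(\alpha_i^u)^{1/s}(\alpha_i^q)^{1/t}$ for a family of exponents $(s,t)$; the infimum over $(t_u,t_q)$ of the KL expression in the theorem is precisely the Legendre-type dual describing the optimal exponents of that inequality, and the minimum-KL coupling $T$ is the extremizer. This is the technical heart of the cited result, and without it (or an equivalent substitute valid for arbitrary subsets of the biased cube) your argument establishes the bound only for the restricted class of type-structured data structures, not for all list-of-points schemes.
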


Our proof strategy is to first give a reduction from the the GapSS problem to the \problemname~problem. Note that the \problemname~problem involves a \emph{statistical} step where we receive samples from an unknown distribution (our query). On the other hand, the query of GapSS is a specified vector, rather than a distribution, with no ambiguity. Our reduction bridges this and shows that GapSS is a `strictly easier' problem than \problemname.

\begin{theorem}[Reduction from random $\mathrm{GapSS}$ to $\mathrm{\problemname}$]\label{thm:reduction}
If a list-of-points data structure solves the $\mathrm{\problemname}(w_u,s)$ problem of size $k$ in Definition~\ref{def:flattened_nearest_neighbor_instance} using time $O(k^{\rho_q})$ and space $O(k^{1+\rho_u})$, and succeeds with probability at least $0.99$, then there exists a list-of-points data structure solving the $\mathrm{GapSS}(w_u,w_q)$ problem for $w_q=w_u\left(1-e^{\frac{-1}{s\cdot w_u}}\right)$ using space $O(k^{1+\rho_u})$ and time $O(k^{\rho_q}+w_q\cdot n)$, and succeeds with probability at least $0.99$.
\end{theorem}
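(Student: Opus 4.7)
The plan is to build a black-box GapSS data structure from the hypothesized \problemname\ structure by reinterpreting the dataset verbatim and applying a randomized transformation only to the query. On the dataset side, a GapSS point $p \in \{0,1\}^n$ with i.i.d.\ $\mathrm{Bern}(w_u)$ coordinates can be viewed as the support indicator of a uniform distribution on $\{i : p_i = 1\}$, and this distribution is exactly a sample from the \problemname\ data law at parameter $w_u$. Hence the same $k$ bit vectors serve as both a GapSS and a \problemname\ dataset, and we can reuse the \problemname\ list-of-points sets $A_i$ and lists $L_i = P \cap A_i$ verbatim.

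The main step is to design a randomized map from a GapSS query $q$ to a \problemname\ query $q'$ so that the joint law of $(q',p^*)$ matches \cref{def:flattened_nearest_neighbor_instance}. I would use Poisson thinning: conditional on $p^*$ with $N = |supp(p^*)|$, the \problemname\ query of $\mathrm{Poi}(N/(s w_u))$ uniform samples from $supp(p^*)$ is distributionally equivalent to drawing $X_i \sim \mathrm{Poi}(1/(s w_u))$ independently for each $i \in supp(p^*)$ (and setting $X_i = 0$ for $i \notin supp(p^*)$), and then placing $X_i$ copies of $i$ in $q'$. In the GapSS joint law $P_{QU}$, on the other hand, $q_i = 0$ whenever $p^*_i = 0$ and $\mathbb{P}[q_i = 1 \mid p^*_i = 1] = w_q/w_u$. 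The choice $w_q = w_u(1 - e^{-1/(s w_u)})$ precisely matches $w_q/w_u$ to $\mathbb{P}[X \geq 1]$ for $X \sim \mathrm{Poi}(1/(s w_u))$. So for each $i$ with $q_i = 1$, sampling $X_i$ from $\mathrm{Poi}(1/(s w_u))$ conditioned on $X_i \geq 1$ and placing $X_i$ copies of $i$ into $q'$ (and contributing nothing when $q_i = 0$) yields a $q'$ with exactly the \problemname\ conditional distribution given $p^*$. The main subtlety is checking this distributional identity carefully; it boils down to the Poisson marginal identity $\mathbb{P}[X \geq 1] = 1 - e^{-1/(s w_u)}$ and the fact that the law of a positive Poisson is preserved under the conditioning we apply.

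With the transformation in hand, the GapSS data structure is: on input $q$, form $q'$ as above, set $I_{\mathrm{GapSS}}(q) = I_{\mathrm{\problemname}}(q')$, and scan the same lists $L_i$ using the predicate inherited from the \problemname\ structure (which identifies the planted point). Space is unchanged at $O(k^{1+\rho_u})$; success holds at probability at least $0.99$ because $q'$ has exactly the \problemname\ query distribution relative to $p^*$, so any failure of the reduction is a failure of the underlying \problemname\ solver; and query time decomposes as $O(k^{\rho_q})$ from the \problemname\ call plus $O(w_q \cdot n)$ to read the $\approx w_q n$ one-positions of $q$ and draw a constant-time Poisson-conditioned variable at each.
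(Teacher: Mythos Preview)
Your proposal is correct and follows essentially the same approach as the paper: reinterpret each GapSS dataset vector as the support of a uniform distribution, and convert the GapSS query $q$ into a \problemname\ query by replacing each coordinate with $q_i=1$ by a $\mathrm{Poi}_+(1/(sw_u))$ number of copies of $i$, using $w_q=w_u(1-e^{-1/(sw_u)})$ so that the mixture of ``zero with probability $1-w_q/w_u$'' and ``positive Poisson otherwise'' recovers the unconditioned $\mathrm{Poi}(1/(sw_u))$ law coordinatewise, and then invoking Poisson thinning to identify this with $\mathrm{Poi}(|supp(p^*)|/(sw_u))$ uniform samples. If anything, your write-up is slightly more explicit than the paper's about why the list-of-points structure transfers verbatim and where the $O(w_q n)$ additive query cost comes from.
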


\begin{proof}
We provide a reduction from random $\mathrm{GapSS}(w_u,w_q)$ to $\mathrm{\problemname}(w_u,s)$ with $s=\frac{-1}{w_u\log \left(1-\frac{w_q}{w_u}\right)}$. Specifically, for each instance $(P_1,p^*_1,q_1)$ generated from $\mathrm{GapSS}(w_u,w_q)$ in Definition~\ref{def:supermajority_instance}, we will construct an instance $(P_2,p^*_2,q_2)$ generated from $\mathrm{\problemname}(w_u,s)$ satisfying Definition~\ref{def:flattened_nearest_neighbor_instance} for some $s$.

For each point $p_1\in P_1$, it is straightforward to construct a corresponding uniform distribution $p_2$ supported on those coordinates where $p_{1}[i]=1$. 
Then let's construct $q_2$ from $q_1$. Recall that for each $i\in U$ with $p^*_1[i]=1$, we have $q_1[i]=0$ with probability $1-\frac{w_q}{w_u}$, in which case we add no element $i$ to $q_2$. If $q_1[i]=1$, we add $\mathrm{Poi}_{+}\left(\frac{1}{s\cdot w_u}\right)$ copies of element $i$ to $q_2$ where $\mb{P}\left[\mathrm{Poi}_{+}(\lambda)=x\right]=\frac{\mb{P}[\mathrm{Poi}(\lambda)=x]}{\mb{P}[\mathrm{Poi}(\lambda)>0]}$ for any $x>0$.
By setting $s=\frac{-1}{w_u\log \left(1-\frac{w_q}{w_u}\right)}$, we have $\mb{P}\left[\mathrm{Poi}\left(\frac{1}{s\cdot w_u}\right)=0\right]=1-\frac{w_q}{w_u}$. Thus for each element $i$ in $p^*_2$, the number of its appearances in $q_2$ exactly follows the distribution $\mathrm{Poi}(\frac{1}{s\cdot w_u})$. According to the property of the Poisson distribution, uniformly sampling $\mathrm{Poi}\left(\frac{|supp(p^*_2)|}{s\cdot w_u}\right)$ elements from a set of size $|supp(p^*_2)|$ is equivalent to sampling each element $\mathrm{Poi}(\frac{1}{s\cdot w_u})$ times.
Therefore, the constructed instance $(P_2,p^*_2,q_2)$ is an instance of $\mathrm{\problemname}(w_u,s)$, as stated in Definition~\ref{def:flattened_nearest_neighbor_instance}.
Equivalently, we have the relationship $w_q=w_u\left(1-e^{\frac{-1}{s\cdot w_u}}\right)$.

Hence we complete our reduction from $\mathrm{GapSS}$(Definition~\ref{def:supermajority_instance}) to $\mathrm{\problemname}$ (Definition~\ref{def:flattened_nearest_neighbor_instance}).
% \todo{sandeep: I dont understand this paragraph below. Instead of o in epsilon, let's be more explicit...}
% Now, let $p'_2$ be the output of the reduced problem. We know that by setting $\epsilon<o(w_u\cdot n)$, if $\|p'_2-p^*_2\|_1\le \epsilon$, $p'_2$ must be $p^*_2$ itself \xnote{need $w_q\cdot n>\log n$} and is also the correct answer for $p^*_1$ in the original problem random $\mathrm{GapSS}(w_u,w_q)$. 
\end{proof}

To get the desired space-time trade-off in the sublinear sample regime, which means $s\to \infty$ (or equivalently $w_q\to 0$), and to get an interpretable analytic bound, we need to develop upon the lower bound in Theorem~\ref{thm:supermajority_lowerbound}. This requires explicitly solving the optimization problem in Theorem~\ref{thm:supermajority_lowerbound}. Proving Theorem \ref{thm:explicit_gapss_lowerbound} (proof in Appendix \ref{sec:lower_bounds}) is the main technical contribution of the paper.

\begin{restatable}[Explicit lower bound for random GapSS instance]{theorem}{explowerbound}\label{thm:explicit_gapss_lowerbound}
Consider any list-of-points data structure for solving the random $\mathrm{GapSS}\left(\frac12,w_q\right)$ 
% with $w_q\to 0$ 
which has expected space $O(k^{1+\rho_u})$, uses expected query time $O\left(k^{\rho_q-o(1)}\right)$, and succeeds with probability at least $0.99$. Then we have the following lower bound for sufficiently small $w_q$:
$\rho_q\ge1-w^{1-\log 2-o(1)}_q+\frac{\rho_u}{1+\log  w_q}$.
\end{restatable}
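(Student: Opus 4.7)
The plan is to specialize Theorem~\ref{thm:supermajority_lowerbound} to $w_u = \tfrac{1}{2}$ and then explicitly evaluate the infimum in the resulting two-variable optimization as $w_q \to 0$. The first simplification is that because $P = \begin{bmatrix} w_q & 0 \\ \tfrac{1}{2} - w_q & \tfrac{1}{2} \end{bmatrix}$ has a forced zero at position $(1,0)$, the constraint $T \ll P$ together with the prescribed marginals $(t_q, t_u)$ uniquely determines
\[
T \;=\; \begin{bmatrix} t_q & 0 \\ t_u - t_q & 1 - t_u \end{bmatrix}
\]
(valid for $0 \le t_q \le t_u \le 1$), so the $\arginf$ over $T$ is trivial and
\[
D(T\|P) \;=\; t_q \log \tfrac{t_q}{w_q} \;+\; (t_u - t_q)\log \tfrac{t_u - t_q}{\tfrac{1}{2} - w_q} \;+\; (1 - t_u)\log\!\bigl(2(1-t_u)\bigr).
\]

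Next I would rewrite $F$ by factoring out the common denominator, obtaining $F(t_u,t_q) = \bigl[D(T\|P) - \alpha\, d(t_q\|w_q)\bigr]/d(t_u\|\tfrac{1}{2}) - (1-\alpha)$, so Theorem~\ref{thm:supermajority_lowerbound} rearranges to
\[
\alpha \rho_q + (1-\alpha)(1 + \rho_u) \;\ge\; G(\alpha, w_q) \;:=\; \inf_{\substack{t_u \neq 1/2 \\ t_q \le t_u}} \frac{D(T\|P) - \alpha\, d(t_q\|w_q)}{d(t_u\|\tfrac{1}{2})}.
\]
Solving for $\rho_q$ gives a family of lower bounds parameterized by $\alpha$. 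I would pick $\alpha = 1 - 1/|\log w_q|$, which is the unique choice (to leading order) that makes the induced coefficient of $\rho_u$ equal to the target value $1/(1+\log w_q)$; the remaining task is to lower-bound $G$ at this $\alpha$.

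The technical heart is an asymptotic evaluation of $G$. Guided by the form of the target bound, I would use the ansatz $t_u = \tfrac{1}{2}(1 - \delta)$ with $\delta$ small and $t_q = c\, w_q^{a}$ for some $a \in (0,1)$, and Taylor-expand each KL contribution: $d(t_u\|\tfrac{1}{2}) \approx 2\delta^{2}$, $d(t_q\|w_q) \approx t_q \log(t_q/w_q) - (t_q - w_q)$, and the cross term $(t_u - t_q)\log\tfrac{t_u - t_q}{\tfrac{1}{2} - w_q}$ contributes an $O(\delta^{2})$ piece together with a $-t_q\log(1/w_q) + O(t_q)$ piece. First-order stationarity in $\delta$ and $t_q$ then yields a two-equation system that forces the balance $\delta^{2} \asymp t_q \log(1/w_q)$, pins the optimal $t_q$ at the scale $w_q^{1-\log 2 + o(1)}$, and gives $G(\alpha,w_q) \ge 1 - w_q^{1 - \log 2 - o(1)}$. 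Combining with the choice of $\alpha$ above yields the claimed bound $\rho_q \ge 1 - w_q^{1 - \log 2 - o(1)} + \rho_u/(1+\log w_q)$.

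The main obstacle is exactly this asymptotic optimization: the infimum admits no closed form, and the exponent $1 - \log 2$ emerges only through a delicate simultaneous balance of (i) the scale of $t_q$ relative to $w_q$, (ii) the perturbation $\delta$ of $t_u$, and (iii) the choice of $\alpha$, each of which affects the leading order of $G$. A secondary subtlety is that first-order terms (needed for the $\rho_u$ coefficient) and second-order corrections (needed for the $w_q^{1-\log 2-o(1)}$ term) must be tracked together at matched precision, and one must verify that the infimum is attained at an interior point where these expansions are valid, rather than on the boundary $t_u \to 1/2$ or $t_q \to 0$ where $F$ becomes singular.
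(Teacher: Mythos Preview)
Your initial moves—determining $T$ uniquely, rewriting $F$, and choosing $\alpha = 1 + 1/\log w_q$—are correct and match the paper. The gap is in the ansatz for the infimum. Placing $t_u$ near $\tfrac12$ is exactly the wrong region. After imposing $\partial F/\partial t_q = 0$ (which the paper also does), one gets $t_u = t_q + (\tfrac12 - w_q)(t_q/w_q)^{1-\alpha}\bigl((1-t_q)/(1-w_q)\bigr)^\alpha$; writing $t_q = w_q^{1+\gamma}$ and using $1-\alpha = -1/\log w_q$ yields $t_u = \tfrac12 e^{-\gamma} + o(1)$. Thus $t_u \approx \tfrac12$ only when $t_q \approx w_q$, whereas $t_q \asymp w_q^{1-\log 2}$ (i.e.\ $\gamma \to -\log 2$) forces $t_u \to 1$. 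Your two ansatz conditions ($\delta$ small and $a<1$) are mutually inconsistent on the critical curve, so the claimed balance $\delta^2\asymp t_q\log(1/w_q)$ cannot arise, and a Taylor expansion around $t_u=\tfrac12$ cannot see the region where the bound is actually set.

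In fact the exponent $1-\log 2$ does not come from an interior stationary point at all; it comes from the boundary of the feasible region. The constraint $t_u < 1$, through the relation above, forces $t_q < x^* := w_q^{1-\log 2 + o(1)}$. The paper substitutes the $t_q$-stationarity relation to write $F = \alpha + b(t_q)/a(t_q)$ with $a = d(t_u\|\tfrac12)$, and then carries out an exhaustive case analysis on $t_q$ over $(0, x^*)$: for $t_q$ bounded away from $w_q$ (say $|\gamma|\ge 0.01$), $t_u$ is bounded away from $\tfrac12$, so $a = \Omega(1)$ while $|b| = O(t_q + w_q) \le O(x^*)$, giving $b/a \ge -w_q^{1-\log 2 - o(1)}$; in a shrinking window around $t_q = w_q$ a finer second-order expansion gives the sharper $|b/a| \le w_q^{1-o(1)}$; the endpoint $t_q=0$ is handled separately. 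This yields a uniform lower bound over all feasible $(t_u,t_q)$, which is what is required—locating a single candidate minimizer and evaluating there would only upper-bound the infimum. The boundary cases you flag ($t_u\to\tfrac12$, $t_q\to 0$) are not the dangerous ones; the one that drives the exponent is $t_u\to 1$.
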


Applying our reduction to the random GapSS lower bound above allows us to prove our main theorem.

\begin{proof}[Proof of Theorem \ref{thm:lower_bound_FNNS}]
According to the reduction given in Theorem~\ref{thm:reduction} from $\mathrm{GapSS}(w_u,w_q)$ to $\mathrm{\problemname}(w_u,s)$ where 
% $w_q=w_u\left(1-e^{\frac{-1}{w_us}}\right)=\frac{1}{s}+O(\frac{1}{s^2})$ when $s$ goes to infinity,
$w_q=w_u\left(1-e^{\frac{-1}{w_us}}\right)\ge\frac{1}{s}$. We can apply the lower bound in Theorem~\ref{thm:explicit_gapss_lowerbound} and get the desired lower bound.
% \begin{align}
% \rho_q& \ge 1- \left(s^{-1}+O(s^{-2})\right)^{\log 2-1+o(1)}+\frac{\rho_u}{1+\log (s^{-1}+O(s^{-2}))}\\
% &\ge 1- s^{1-\log 2-o(1)}-\frac{\rho_u}{\log  s-1}
% \end{align}
\end{proof}

\begin{remark}
Note that in $\mathrm{\problemname}\left(\frac12,s\right)$, the distributions are uniform over random subsets of expected size $n/2$ and the query set is generated by taking $\mathrm{Poi}\left(\frac{2|supp(p^*)|}{s}\right)$ samples from one of them $p^*$. This is not quite saying that the query complexity is $n/s$. However, by standard concentration bounds, from the Poisson sample, we can simulate sampling with a fixed number of samples $n/s-\tilde O(\sqrt{n/s})=n/s(1-o(1))$ with high probability, and so, any algorithm using this fixed number of samples must have the same lower bound on $\rho_q$ as in~\cref{thm:lower_bound_FNNS}.
\end{remark}

\section{A simple algorithm for half-uniform density estimation problem}
\label{sec:algorithm}
%Recall that we are given as input distributions $p_1,\dots,p_k$. We assume that $n$ and $k$ are polynomially related. We want to build a data structure such that upon receiving samples from an unknown distribution $p_{i^*}$, we can quickly identify a distribution $p_i$ such that $\|p_{i^*}-p_i\|_1 \le \eps$. %We will consider the proper case, where the samples come from one of the distributions $p=p_{i^*}$, $i^*\in [n]$ in which case the goal is to output an $i$ such that $\|p-p_i\|_1 \le \eps$. 
%We will assume that $\|p-p_j\|_1 > \eps$ for $j\neq i^*$ in which case the task is to actually return $i^*$.

In this section, we present a simple algorithm for a special case of the density estimation problem when the input distributions are \emph{half-uniform}. The algorithm also works for the related \problemname$(\frac{1}{2},s)$ problem of~\cref{thm:lower_bound_FNNS}.
A distribution $p$ over $[n]$ is \emph{half-uniform} if there exists $T\subset [n]$ with $|T|=n/2$ such that $p[i] = 2/n$ if $i \in T$ and $0$ otherwise. The problem we consider in this section is:
\begin{definition}[Half-uniform density estimation problem; HUDE$(s,\eps)$]\label{def:half-uniform-problem}
For a domain $[n]$, integer $k$, $\eps>0$, and $s>0$, we consider the following data structure problem.
\begin{enumerate}[leftmargin=*]
\item A dataset $P$ of $k$  distributions $p_1,\dots, p_k$ over $[n]$ which are half-uniform over subsets $T_1,\dots,T_k\subset[n]$ each of size $n/2$ is given.
\item We receive a query set $q$ consisting of $n/s$ samples from an unknown distribution $p_{i^*}\in P$ satisfying that $\|p_{i^*}-p_j\|\geq \eps$ for $j\neq i^*$.
\item The goal of the data structure is to preprocess $P$ such that when given the query set $q$, it recovers the distribution $p_{i^*}$ with probability at least 0.99.
\end{enumerate}
%Given as input distributions $p_1,\dots, p_k$ over $[n]$ which are half-uniform over subsets $T_1,\dots,T_k\subset[n]$ each of size $n/2$, design a data structure with the following property. When queried with $n/s$ samples from an unknown distribution $p_{i^*}$ such that $\|p_{i^*}-p_j\|\geq \eps$ for $j\neq i^*$, the data structure recovers $p_{i^*}$ with probability at least 0.99. 

\end{definition}
This problem is related to the $\mathrm{\problemname}(1/2,s)$ problem in~\cref{thm:lower_bound_FNNS}. Indeed, with high probability, an instance of $\mathrm{\problemname}(1/2,s)$ consists of \emph{almost} half-uniform distributions with support size $n/2\pm O(\sqrt{n\log k})$. Moreover, two such distributions $p_i,p_j$ have $\|p_i-p_j\|_1=(1\pm O(\sqrt{(\log k)/n}))$ with high probability. Thus, an instance of \problemname$(1/2,s)$ is essentially an instance of HUDE$(s,1)$.

%with high probability, for any two distributions $p_i,p_j$ of $\mathrm{\problemname}(1/2,s)$, we have that $\|p_i-p_j\|_1=(1\pm O(\sqrt{(\log k)/n}))$, which corresponds to $\eps=1$ in the HUDE$(s,\eps)$ problem. Disregarding the small variance in support size, \problemname$(1/2,s)$ can thus be seen as a special case of HUDE$(s,\eps)$. %However, HUDE$(s,\eps)$ is more general allowing for $\eps$ to be arbitrary and not requiring that the distributions be on a random support.

%\xnote{shall we summarize the problem definition above and make a similar definition like def2.2, so that the readers can know the difference}
%As in the~\cref{sec:lower_bounds}, we are interested in the trade-off between the space and query time of the data structure.
To solve HUDE$(s,\eps)$, we can essentially apply the similarity search data structure of~\cite{ahle2020subsets} querying it with the set $Q$ consisting of all elements that were sampled at least once. %We can then query the algorithm from~\cite{ahle2020subsets} with the set $Q$. 
This approach obtains the optimal trade-off between $\rho_u$ and $\rho_q$ (at least in the List-of-points model). 
%As we will later discuss, this approach gives the optimal trade-off between $\rho_u$ and $\rho_q$ (at least in the List-of-points model). 
The contribution of this section is to provide a \emph{very} simple alternative algorithm with a slightly weaker trade-off between $\rho_u$ and $\rho_q$. \cref{sec:experiments} evaluates the simplified algorithm experimentally. Our main theorem is:

%To introduce some notation, suppose the input distributions $p_1,\dots, p_k$ are half-uniform over subsets $T_1,\dots, T_k$ each of size $n/2$. Suppose that the number of samples is $n/s$ for some parameter $s$ such that $n/s>C(\log k)/\eps^2$ for a sufficiently large constant $C$. 
%\xnote{this paragraph can also be incorporated into a formal definition.}

\begin{restatable}{theorem}{mainupper}\label{thm:main-upper}
Suppose $n$ and $k$ are polynomially related, $s \ge 2$, and that $s$ is such that\footnote{The requirement $n/s\gg \log k/\eps^2$ is the information theoretic lower bound for the density estimation problem.} $\frac{n}{s}\geq C\frac{\log k}{\eps^2}$ for a sufficiently large constant $C$. Let $\eps>0$ and $\rho_u>0$ be given. There exists a data structure for the HUDE$(s,\eps)$ problem using space $O(k^{1+\rho_u}+nk)$ and with query time $O\left(k^{1-\frac{\eps \rho_u}{2\log (2s)}}+n/s\right)$.
%Let $s$ be such that $n/s\geq C\log n$ $s$ and half-uniform distributions $p_1,\dots,p_k$ as above. For any constant $\rho_u>0$, there exists a data structure using space $O(k^{1+\rho_u}+nk)$ which upon receiving $n/s$ samples from $p_{i^*}$ returns  $j\in [k]$ such that $j=i^*$ with probability at least $0.99$. The query time is
\end{restatable}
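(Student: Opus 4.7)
My plan is to use the simplest possible ``chosen paths'' LSH scheme. The data structure is parameterized by a path length $\ell$ and a number of paths $P$, both chosen at the end to balance space and time. In preprocessing, I sample $P$ independent tuples $\pi=(x_1,\dots,x_\ell)\in[n]^\ell$ uniformly at random, and for each $\pi$ I store the list $L_\pi=\{i:x_j\in T_i\text{ for all }j\}$; separately, I store the $k$ distributions themselves, which accounts for the $O(nk)$ space term. At query time I read the $n/s$ samples, form the set $Q\subseteq[n]$ of elements hit at least once, and probe exactly those paths whose coordinates all lie in $Q$. Every candidate $i$ appearing in a probed list is then verified against the query samples by a standard Scheff\'e-style pairwise tester.

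The crux of the analysis is a concentration statement about $Q$. For half-uniform distributions, $\|p_{i^*}-p_j\|_1\ge\eps$ is equivalent to $|T_{i^*}\cap T_j|\le(1-\eps/2)|T_{i^*}|$. Since each of the $n/s$ samples is uniform on $T_{i^*}$, the indicators $\{\mathbb{1}[i\in Q]\}_{i\in T_{i^*}}$ are negatively associated with mean $\approx 1-e^{-2/s}$. A Chernoff bound for negatively associated variables, combined with the hypothesis $n/s\ge C\log k/\eps^2$, yields $|Q|=(n/2)(1-e^{-2/s})(1\pm o(1))$ and, after a union bound over $j\ne i^*$, the simultaneous event $|Q\cap T_j|\le(1-\eps/2)|Q|$ for every wrong $j$, with total failure probability at most $0.005$.

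Conditioned on this event, a single random path's collision probability with item $p_i$ is exactly $(|T_i\cap Q|/n)^\ell$, which is at least $(|Q|/n)^\ell$ for $i^*$ and at most $((1-\eps/2)|Q|/n)^\ell$ for every wrong $j$. Taking $P=\Theta(\log k)\cdot(n/|Q|)^\ell$ catches $i^*$ with probability $\ge 0.995$; the expected list storage is $kP(1/2)^\ell$; and the expected number of false candidates to verify is at most $k(1-\eps/2)^\ell$. Fixing $P/2^\ell=k^{\rho_u}$ pins down $\ell\asymp\rho_u\log k/\log(1/(1-e^{-2/s}))$, and combining the elementary estimates $\log(1/(1-e^{-2/s}))\le\log(2s)$ for $s\ge 2$ with $\log(1/(1-\eps/2))\ge\eps/2$ bounds the expected false-candidate count by $k^{1-\eps\rho_u/(2\log(2s))}$. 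Adding the $O(n/s)$ cost of reading and processing the samples gives the claimed query-time bound, while the space is $O(k\cdot P/2^\ell)+O(nk)=O(k^{1+\rho_u}+nk)$.

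The step I expect to be most delicate is the concentration statement, because the indicators $\mathbb{1}[i\in Q]$ are only negatively associated rather than independent and we need a tail bound that is uniform over all $k-1$ alternatives; the correct tool is the Chernoff bound for negatively associated variables followed by a union bound, and this is exactly where the hypothesis $n/s\ge C\log k/\eps^2$ is consumed. A secondary subtlety is keeping verification cheap: a naive per-candidate tester would dominate the query time, but since the candidates are half-uniform and we already have $n/s\gtrsim\log k/\eps^2$ samples, a Scheff\'e-style tournament re-uses the same sample set across candidates at cost $O(\log k/\eps^2)$ per pair, keeping the total verification work inside the claimed bound.
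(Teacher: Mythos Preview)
Your approach is essentially the paper's: sample random $\ell$-tuples (the paper uses $\ell$-subsets, a negligible difference), bucket distributions by which tuples land in their support, at query time probe the buckets whose tuples lie in the set $Q$ of sampled elements, and verify candidates. Your concentration argument for $|Q|$ and $|Q\cap T_j|$ via negative association is more explicit than the paper's, which just cites ``standard concentration'' and Azuma. The paper's verification is lighter than your Scheff\'e tester---it draws elements of $Q$ one at a time and checks containment in $T_j$, stopping at the first miss, so only $O(1/\eps)$ expected work per wrong candidate---but either works.

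There is, however, a real gap in your accounting. You fix $P/2^\ell=k^{\rho_u}$, which together with $P=\Theta(\log k)(n/|Q|)^\ell$ gives $\ell\asymp\rho_u\log k/\log\bigl(1/(1-e^{-2/s})\bigr)$ and hence $P=2^\ell k^{\rho_u}$. But your space and time bounds omit the $O(P\ell)$ needed to \emph{store} the $P$ paths and the $\Omega(P)$ time to \emph{scan} them for containment in $Q$. For small $s$ these terms dominate: at $s=2$ one has $\log\bigl(1/(1-e^{-1})\bigr)\approx0.46<\log2$, so $2^\ell\approx k^{1.51\rho_u}$ and $P\approx k^{2.51\rho_u}$; this already exceeds the claimed space $k^{1+\rho_u}$ once $\rho_u>2/3$ and exceeds the claimed query time $k^{1-\Theta(\eps\rho_u)}$ once $\rho_u\gtrsim 1/3$. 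The paper avoids this by constraining the number of buckets directly: it sets $L=k^{\rho_u}$ (so the $L\ell$ overhead is harmless) and solves $L=C\bigl(2/(1-e^{-2/s})\bigr)^\ell$ for $\ell$, obtaining $\ell=\rho_u\log k/\log\bigl(2/(1-e^{-2/s})\bigr)$. Your final estimate $(1-\eps/2)^\ell\le k^{-\eps\rho_u/(2\log(2s))}$ still holds with this smaller $\ell$, since the operative inequality is $\log\bigl(2/(1-e^{-2/s})\bigr)\le\log(2s)$ for $s\ge2$. So the fix is a one-line change---constrain $P$ itself, not $P/2^\ell$---but as written your bounds do not hold over the full range $s\ge2$.
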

 %namely no algorithm can identify the correct distribution if the number of samples $n/s=o(\log k/\eps^2)$. 

Let us compare the upper bound of~\cref{thm:main-upper} to the lower bound in~\cref{thm:lower_bound_FNNS}. While~\cref{thm:main-upper} is stated for half-uniform distributions, its proof is easily modified to work for the \problemname$(1/2,s)$ problem where the support size is random. %(in this case, with high probability in $k$, the distributions are uniform over sets of size $n/2\pm O(\sqrt{n\log k})$). 
Then $\eps=1-O(1)$ and as $s\to\infty$, $\frac{2}{1-e^{-2/s}}=s(1+o(1))$. Thus, the query time exponent in~\cref{thm:main-upper} is $\rho_q=1-(1+o(1)))\frac{\log(2)\rho_u}{\log s}$. 
\problemname$(1/2,s)$  is exactly the hard instance in~\cref{thm:lower_bound_FNNS}, and so we know that any algorithm must have $\rho_q\geq 1-(1+o(1))\frac{\rho_u}{\log s}$ as $s\to \infty$.
Asymptotically, our algorithm therefore gets the right logarithmic dependence on $s$ but with a leading constant of $\log(2)\approx 0.693$ instead of $1$.
%\problemname$(1/2,s)$  is exactly the hard instance in~\cref{thm:lower_bound_FNNS}, and so we know that any algorithm must have $\rho_q\geq 1-(1+o(1))\frac{\rho_u}{\log s}$ as $s\to \infty$. On the other hand, 

%What is the query time of the algorithm in~\cref{thm:main-upper}? When $s\to\infty$, $\frac{2}{1-e^{-2/s}}=s(1+o(1))$. Moreover, for the \problemname$(1/2,s)$ problem, it is easy to check that $\eps=(1-o(1))$ suffices with high probability. Thus, the query time exponent in~\cref{thm:main-upper} becomes $\rho_q=1-(1+o(1)))\frac{\log(2)\rho_u}{\log s}$. Asymptotically, our algorithm therefore gets the right logarithmic dependence on $s$ but with a leading constant of $\log(2)\approx 0.693$ instead of $1$.

Next we define the algorithm. Let $\ell$ and $L$ be parameters which we will also specify shortly. %For now it suffices to know that $\ell=O(\log(k))$.
During preprocessing, our algorithm samples $L$ subsets $S_1,\dots, S_L$ of $[n]$ each of size $\ell$ independently and uniformly at random. For each $i\in L$, it stores a set $A_i$ of all indices $j\in [k]$ such that $S_i\subset T_j$, namely the indices of the distributions $p_j$ which contain $S_i$ in their support. See~\cref{alg:preprocessing}.

\begin{algorithm}[t]
\caption{\label{alg:preprocessing} Density estimation for half-uniform distributions (preprocessing)}
\begin{algorithmic}[1]
\State \textbf{Input}: Half uniform distributions $\{p_i\}_{i=1}^k$ over $[n]$ with support sets $\{T_i\}_{i=1}^k$.
\State \textbf{Output}: A data structure for the density estimation problem.
\Procedure{Preprocessing}{$\{p_i\}_{i=1}^k$}
\For{$i=1$ to $L$}
    \State $S_i\gets$ sample of size $\ell$ from $[n]$
    \State $A_i\gets \{j\in [k]\mid S_i\subset T_j\}$ 
\EndFor
\State \textbf{Return} $(S_i,A_i)_{i\in [L]}$
\EndProcedure
\end{algorithmic}
\end{algorithm}

\begin{algorithm}[t]
\caption{\label{alg:query} Query algorithm for half-uniform distributions}
\begin{algorithmic}[1]
\State \textbf{Input}: Half uniform distributions $\{p_i\}_{i=1}^k$ over $[n]$ with support sets $\{T_i\}_{i=1}^k$, data structure $(S_i,A_i)_{i\in [L]}\gets \texttt{Preprocessing}(\{p_i\}_{i=1}^k)$, and $n/s$ samples from query distribution $p=p_{i^*}$.
\State \textbf{Output}: A distribution $p_j$.
\Procedure{Density-Estimation}{$\{p_i\}_{i=1}^k$}
\State $Q\gets \{i\in [n]\mid i \text{ appeared in the $n/s$ samples}\}$
\For{$i=1$ to $L$}
\If{$S_i\subset Q$}
\For{$j\in A_i$}
\State $U_j\gets$ sample from $Q$ of size $C\frac{\log n}{\eps}$ for a large constant $C$.
\If{$U_j\subset T_j$}
\State \textbf{Return:} $p_j$
\EndIf
\EndFor
\EndIf
\EndFor
\EndProcedure
\end{algorithmic}
\end{algorithm}

During the query phase, we receive $n/s$ samples from $p=p_{i^*}$ for some unknown $i^*\in [k]$. Our algorithm first forms the subset $Q\subset [n]$ consisting of all elements that were sampled at least once. Note that $|Q|\leq n/s$ as elements can be sampled several times. The algorithm proceeds in two steps. First, it goes through the $L$ sets $S_1,\dots, S_L$ until it finds an $i$ such that $S_i\subset Q$. %If no such $i$ exists, it declares that an error has occurred. 
%Second, it scans through $A_i$ until it finds a $j\in A_i$ such that $Q\subset T_j$. When it finds such a $j$ it outputs $p_j$ as the answer to the density estimation problem.  \todo{Write pseudocode}
Second, it scans through the indices $j\in A_i$. For each such $j$ it samples a set $U_j$ one element at a time from $Q$. It stops this sampling at the first point of time where either $U_j\nsubseteq T_j$ or $|U_j|=C\frac{\log n}{\eps}$ for a sufficiently large constant $C$. In the first case, it concludes that $p_j$ is not the right distribution and proceeds to the next element of $A_j$ and in the latter case, it returns the distribution $p_j$ as the answer to the density estimation problem. See~\cref{alg:query}. We defer the formal proofs to~\cref{app:upper-bound}.

% \begin{lemma}\label{lemma:succeed}
    % Let $L=C\left(\frac{2}{1-e^{-2/s}}\right)^\ell$ for a sufficiently large constant $C$. Assume that $L=k^{O(1)}$. Then~\cref{alg:query} returns $i^*$ with probability at least $0.99$.
% \end{lemma}

%We are now ready to prove~\cref{thm:main-upper}

\section{Experiments}
\label{sec:experiments}

\begin{figure}[htbp]
\centering
\begin{subfigure}[t]{.49\textwidth}
  \centering
  \includegraphics[width=.99\linewidth]{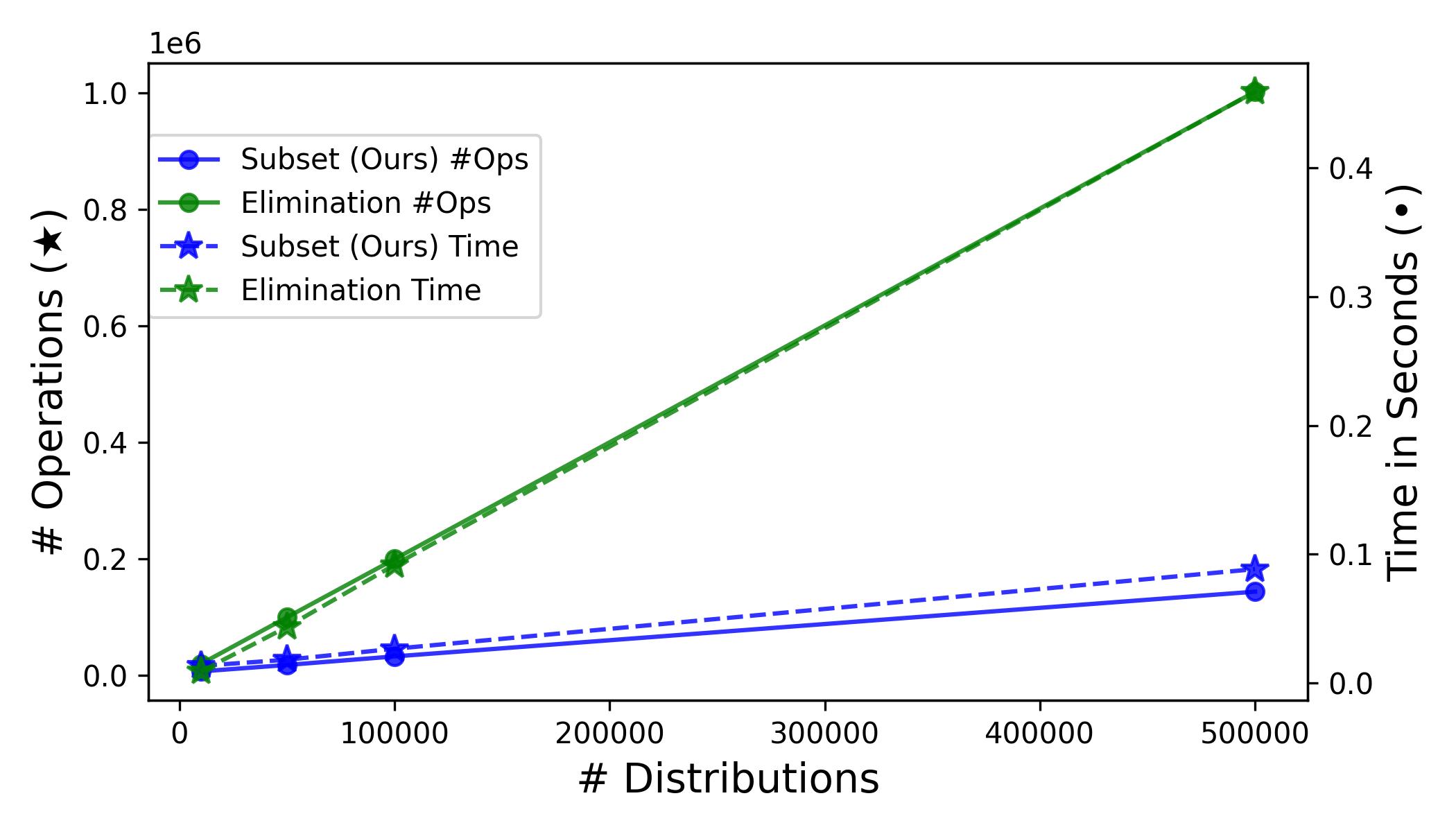}
  \caption{}
  \label{fig:vary-K}
\end{subfigure}%
\begin{subfigure}[t]{.49\textwidth}
  \centering
  \includegraphics[width=.99\linewidth]{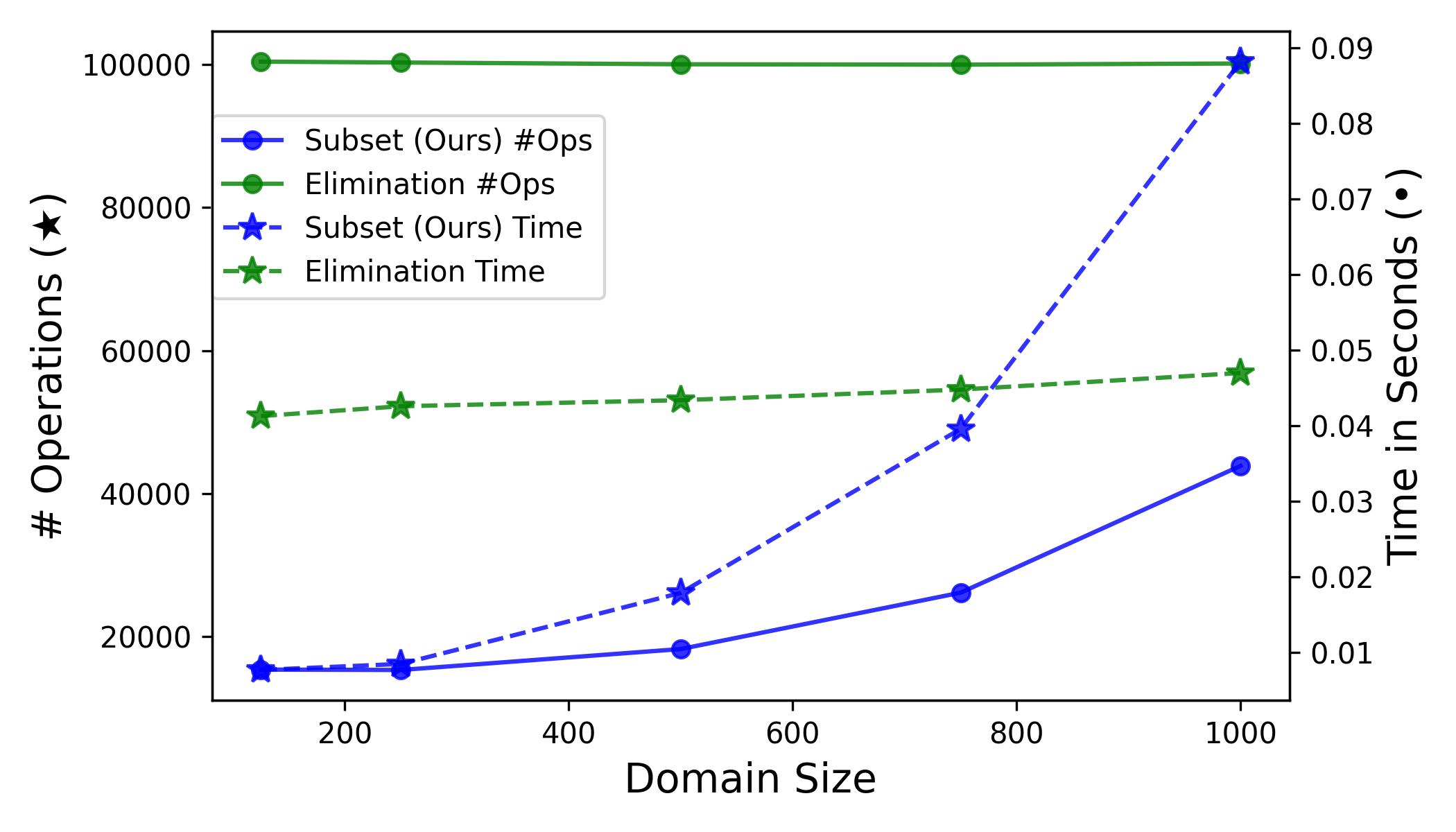}
  \caption{}
  \label{fig:vary-N}
\end{subfigure}\\
\begin{subfigure}[t]{.49\textwidth}
  \centering
  \includegraphics[width=.99\linewidth]{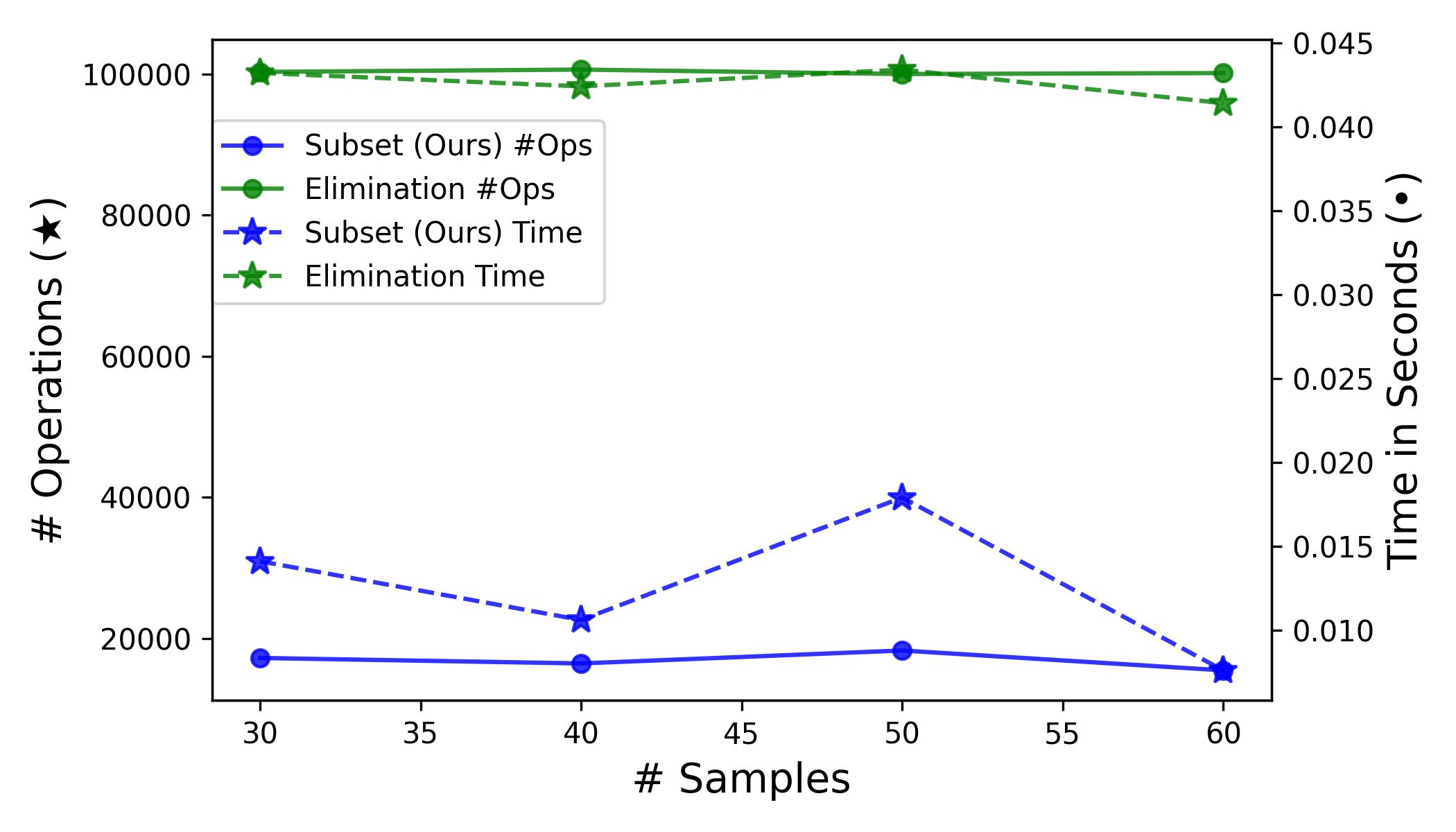}
  \caption{}
  \label{fig:vary-S}
\end{subfigure}
\begin{subfigure}[t]{.49\textwidth}
  \centering
  \includegraphics[width=.99\linewidth]{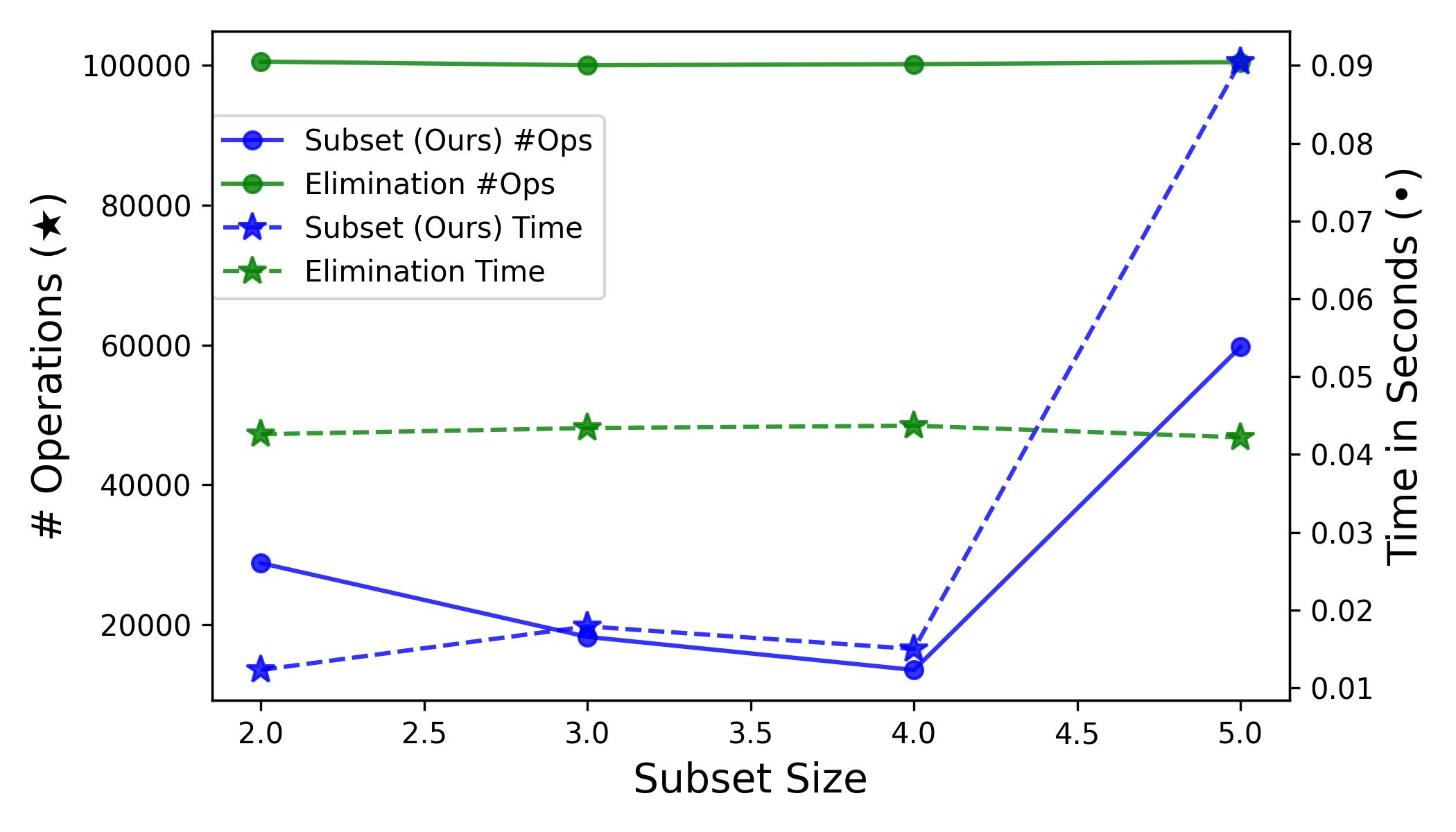}
  \caption{}
  \label{fig:vary-l}
\end{subfigure}
\caption{Comparison of efficiency of the Subset (Ours) and Elimination algorithms as (a): the number of distributions $k$ varies. Other parameters are set to $n=500, S=50, \ell=3$. (b): the domain size $n$ varies. Other parameters are set to $k=50000, S=50, \ell=3$. (c): the number of samples $S$ varies. Other parameters are set to $k=50000, n=500, \ell=3$. (d): the subset size $\ell$ varies. Other parameters are set to $k=50000, n=500, S=50$.}
\label{fig:experiments}
\end{figure}

% \begin{figure}[!t]
%     \centering
%     \includegraphics[width=\columnwidth]{icml24_K.jpg}
%     \caption{Comparison of efficiency of the Subset (Ours) and Elimination algorithms as the number of distributions $k$ varies. Other parameters are set to $n=500, S=50, \ell=3$.}
%     \label{fig:vary-K}
% \end{figure}

% \begin{figure}[!t]
%     \centering
%     \includegraphics[width=\columnwidth]{icml24_N.jpg}
%     \caption{Comparison of efficiency of the Subset (Ours) and Elimination algorithms as the domain size $n$ varies. Other parameters are set to $k=50000, S=50, \ell=3$.}
%     \label{fig:vary-N}
% \end{figure}

% \begin{figure}[!t]
%     \centering
%     \includegraphics[width=\columnwidth]{icml24_S.jpg}
%     \caption{Comparison of efficiency of the Subset (Ours) and Elimination algorithms as the number of samples $S$ varies. Other parameters are set to $k=50000, n=500, \ell=3$.}
%     \label{fig:vary-S}
% \end{figure}

% \begin{figure}[!t]
%     \centering
%     \includegraphics[width=\columnwidth]{icml24_l.jpg}
%     \caption{Comparison of efficiency of the Subset (Ours) and Elimination algorithms as the subset size $\ell$ varies. Other parameters are set to $k=50000, n=500, S=50$.}
%     \label{fig:vary-l}
% \end{figure}

We test our algorithm in \cref{sec:algorithm} experimentally on datasets of half-uniform distributions as in \cite{aamand2023data} and corresponding to our study in Sections~\ref{sec:lower_bounds} and ~\ref{sec:algorithm}. Given parameters $k$ and $n$, the input distributions are $k$ distributions each uniform over a randomly chosen $n/2$ elements.

\paragraph{Algorithms}

We compare two main algorithms: an implementation of the algorithm presented in \cref{sec:algorithm} which we refer to as the Subset algorithm and a baseline for half-uniform distributions which we refer to as the Elimination algorithm. The Subset algorithm utilizes the same techniques as that presented in \cref{sec:algorithm} but with some slight changes geared towards practical usage. We do not compare to the  ``FastTournament'' of \cite{aamand2023data}  since it was computationally prohibitive; see Remark \ref{rem:baselines}.

The Subset algorithm picks $L$ subsets of size $\ell$ and preprocesses the data by constructing a dictionary mapping subsets to the distributions whose support contains that subset. When a query arrives, scan through the $L$ subset until we find one that is contained in the support of the query. We then restrict ourselves to solving the problem over the set of distributions mapped to by that subset and run Elimination. The Elimination algorithm goes through the samples one at a time. It starts with a set of distributions which is the entire input in general or a subset of the input when called as a subroutine of the Subset algorithm. To process a sample, the Elimination algorithm removes from consideration all distributions which do not contain that element in its support. When a single distribution remains, the algorithm returns that distribution as the solution. As the input distributions are random half-uniform distributions, we expect to throw away half of the distributions at each step (other than the true distribution) and terminate in logarithmically in size of the initial set of distribution steps.
\vspace{-3mm}
\paragraph{Experimental Setup}
Our experiments compare the Subset and Elimination algorithms while varying several problem parameters: the number of distributions $k$, the domain size $n$, the number of samples $S$ (for simplicity, we use this notation rather than $n/s$ samples as in the rest of the paper), and the size of subsets $\ell$ used by the Subset algorithm.
While we vary one parameter at a time, the others are set to a default of $k=50000$, $n=500$, $S=50$, $l=3$.
Given these parameters, we evaluate the Subset algorithm and the Elimination algorithm on 100 random queries where each query corresponds to picking one of the input distributions as the true distribution to draw samples from.

In all settings we test, the Elimination algorithm achieves $100\%$ accuracy on these queries, which is to be expected as long as the number of samples is sufficently more than the $\log_2 k$.
There is a remaining free parameter, which is the number of subsets $L$ used in the Subset algorithm. We start with a modest value of $L=200$ and increase $L$ by a factor of $1.5$ repeatedly until the Subset algorithm also achieves $100\%$ accuracy on the queries (in reality, it's failure probability will likely still be greater than that of the Elimination algorithm). The results we report correspond to this smallest value of $L$ for which the algorithm got all the queries correct.

For both algorithms, we track the average number of operations as well as the execution time of the algorithms (not counting preprocessing).
A single operation corresponds to a membership query of checking whether a given distribution/sample contains a specified element in its support which is the main primitive used by both algorithms. We use code from~\cite{aamand2023data} as a basis for our setup.

\paragraph{Results}

For all parameter settings we test, the number of operations per query by our Subset algorithm is significantly less than those required by Elimination algorithm, up to a factor of more than \textbf{6x}.
The average query time (measured in seconds) shows similar improvements for the Subset algorithm though for some parameter settings, it takes more time than the Elimination algorithm.
While, in general, operations and time are highly correlated, these instances where they differ may depend on the specific Python data structures used to implement the algorithms, cache efficiency, or other computational factors.

As the number of distributions $k$ increases, \cref{fig:vary-K} shows that both time and number of operations scale linearly. Across the board, our Subset algorithm outperforms the Elimination algorithm baseline and exhibits a greater improvement as $k$ increases. On the other hand, as the domain size increases in \cref{fig:vary-N}, the efficiency of the Subset algorithm degrades while the Elimination algorithm maintains its performance. This is due to the fact that for larger domains, more subsets are needed in order to correctly answer all queries, leading to a greater runtime.

In \cref{fig:vary-S}, we see that across the board, as we vary the number of samples, the Subset algorithm has a significant advantage over the Elimination algorithm in query operations and time. Finally, \cref{fig:vary-l} shows that for subset size $\ell \in \{2,3,4\}$, the Subset algorithm experiences a significant improvement over the Elimination algorithm. But for $\ell=5$, the improvement (at least in terms of time) suddenly disappears. For this setting, that subset size requires many subsets in order to get high accuracy, leading to longer running times.

Overall, on flat distributions for a variety of parameters, our algorithm has significant benefits even over a baseline tailored for this case. The good performance of the Subset algorithm corresponds with our theory and validates the contribution of providing a simple algorithm for density estimation in this hard setting.

\paragraph{Acknowledgments:} This work was supported by the Jacobs Presidential Fellowship, the Mathworks Fellowship, the NSF TRIPODS program (award DMS-2022448) and Simons Investigator Award; also supported in part by NSF (CCF2008733) and ONR (N00014-22-1-2713).
Justin Chen is supported by an NSF Graduate Research Fellowship under Grant No. 174530. Shyam Narayanan is supported by an NSF Graduate Fellowship and a Google Fellowship. Anders Aamand was supported by the DFF-International Postdoc Grant 0164-00022B and by the VILLUM Foundation grants 54451 and 16582.

%%%%%%%%%%%%%%%%%%%%%%%%%%%%%%%%%%%%%%%%%%%%%%%%%%%%%%%%%%%%

\bibliographystyle{plain}
\bibliography{main}

\begin{thebibliography}{10}

\bibitem{aamand2023data}
Anders Aamand, Alexandr Andoni, Justin~Y Chen, Piotr Indyk, Shyam Narayanan, and Sandeep Silwal.
\newblock Data structures for density estimation.
\newblock In {\em International Conference on Machine Learning}, 2023.

\bibitem{acharya2018maximum}
Jayadev Acharya, Moein Falahatgar, Ashkan Jafarpour, Alon Orlitsky, and Ananda~Theertha Suresh.
\newblock Maximum selection and sorting with adversarial comparators.
\newblock {\em The Journal of Machine Learning Research}, 19(1):2427--2457, 2018.

\bibitem{ahle2020subsets}
Thomas~D Ahle and Jakob~BT Knudsen.
\newblock Subsets and supermajorities: Optimal hashing-based set similarity search.
\newblock In {\em 2020 IEEE 61st Annual Symposium on Foundations of Computer Science (FOCS)}, pages 728--739. IEEE, 2020.

\bibitem{aliakbarpour2024hypothesis}
Maryam Aliakbarpour, Mark Bun, and Adam Smith.
\newblock Hypothesis selection with memory constraints.
\newblock {\em Advances in Neural Information Processing Systems}, 36, 2024.

\bibitem{andoni2017optimal}
Alexandr Andoni, Thijs Laarhoven, Ilya Razenshteyn, and Erik Waingarten.
\newblock Optimal hashing-based time-space trade-offs for approximate near neighbors.
\newblock In {\em Proceedings of the twenty-eighth annual ACM-SIAM symposium on discrete algorithms}, pages 47--66. SIAM, 2017.

\bibitem{andoni2017approximate}
Alexandr Andoni, Huy~L Nguyen, Aleksandar Nikolov, Ilya Razenshteyn, and Erik Waingarten.
\newblock Approximate near neighbors for general symmetric norms.
\newblock In {\em Proceedings of the 49th Annual ACM SIGACT Symposium on Theory of Computing}, pages 902--913, 2017.

\bibitem{bun2019private}
Mark Bun, Gautam Kamath, Thomas Steinke, and Steven~Z Wu.
\newblock Private hypothesis selection.
\newblock {\em Advances in Neural Information Processing Systems}, 32, 2019.

\bibitem{canonne2019structure}
Cl{\'e}ment~L Canonne, Gautam Kamath, Audra McMillan, Adam Smith, and Jonathan Ullman.
\newblock The structure of optimal private tests for simple hypotheses.
\newblock In {\em Proceedings of the 51st Annual ACM SIGACT Symposium on Theory of Computing}, pages 310--321, 2019.

\bibitem{christiani2017set}
Tobias Christiani and Rasmus Pagh.
\newblock Set similarity search beyond minhash.
\newblock In {\em Proceedings of the 49th annual ACM SIGACT symposium on theory of computing}, pages 1094--1107, 2017.

\bibitem{daskalakis2014faster}
Constantinos Daskalakis and Gautam Kamath.
\newblock Faster and sample near-optimal algorithms for proper learning mixtures of gaussians.
\newblock In {\em Conference on Learning Theory}, pages 1183--1213. PMLR, 2014.

\bibitem{devroye2001combinatorial}
Luc Devroye and G{\'{a}}bor Lugosi.
\newblock {\em Combinatorial methods in density estimation}.
\newblock Springer series in statistics. Springer, 2001.

\bibitem{diakonikolas2019robust}
Ilias Diakonikolas, Gautam Kamath, Daniel Kane, Jerry Li, Ankur Moitra, and Alistair Stewart.
\newblock Robust estimators in high-dimensions without the computational intractability.
\newblock {\em SIAM Journal on Computing}, 48(2):742--864, 2019.

\bibitem{gopi2020locally}
Sivakanth Gopi, Gautam Kamath, Janardhan Kulkarni, Aleksandar Nikolov, Zhiwei~Steven Wu, and Huanyu Zhang.
\newblock Locally private hypothesis selection.
\newblock In {\em Conference on Learning Theory}, pages 1785--1816. PMLR, 2020.

\bibitem{indyk1998approximate}
Piotr Indyk and Rajeev Motwani.
\newblock Approximate nearest neighbors: towards removing the curse of dimensionality.
\newblock In {\em Proceedings of the thirtieth annual ACM symposium on Theory of computing}, pages 604--613, 1998.

\bibitem{kamath2020private}
Gautam Kamath, Vikrant Singhal, and Jonathan Ullman.
\newblock Private mean estimation of heavy-tailed distributions.
\newblock In {\em Conference on Learning Theory}, pages 2204--2235. PMLR, 2020.

\bibitem{kamath2015learning}
Sudeep Kamath, Alon Orlitsky, Dheeraj Pichapati, and Ananda~Theertha Suresh.
\newblock On learning distributions from their samples.
\newblock In {\em Conference on Learning Theory}, pages 1066--1100. PMLR, 2015.

\bibitem{scheffe1947useful}
Henry Scheffe.
\newblock {A Useful Convergence Theorem for Probability Distributions}.
\newblock {\em The Annals of Mathematical Statistics}, 18(3):434--438, 1947.

\bibitem{suresh2014spherical}
Ananda~Theertha Suresh, Alon Orlitsky, Jayadev Acharya, and Ashkan Jafarpour.
\newblock Near-optimal-sample estimators for spherical gaussian mixtures.
\newblock In Z.~Ghahramani, M.~Welling, C.~Cortes, N.~Lawrence, and K.Q. Weinberger, editors, {\em Advances in Neural Information Processing Systems}, volume~27. Curran Associates, Inc., 2014.

\end{thebibliography}

\appendix

\section{Appendix: Omitted Proofs of Section \ref{sec:lower_bounds}}

\explowerbound*

\begin{proof}
Our proof proceeds by explicitly calculating the lower bound given in Theorem~\ref{thm:supermajority_lowerbound} when $w_u=\frac12$ and $w_q$ approaches $0$.  Recall that Theorem~\ref{thm:supermajority_lowerbound} states that if a list-of-points data structure solves $\mathrm{GapSS}(w_u,w_q)$ for $k$ points uses expected space $k^{1+\rho_u}$, and has expected query time $k^{\rho_q-o_k(1)}$, then for every $\alpha\in[0,1]$, we have that 
\begin{equation}\label{eq:definition_F}
    \alpha\rho_q+(1-\alpha)\rho_u\ge \inf\limits_{\substack{t_q,t_u\in[0,1]\\ t_u\neq w_u}} \left(\alpha\frac{D(T||P)-d(t_q||w_q)}{d\left(t_u||w_u\right)}+(1-\alpha)\frac{D(T||P)-d(t_u||w_u)}{d(t_u||w_u)}\right)
\end{equation}
where
$P= \begin{bmatrix} w_q & 0 \\ w_u-w_q & 1-w_u \\ \end{bmatrix}$ and
$T=\arginf\limits_{\substack{T \ll P \\ \scriptstyle \mathop{\mathbb{E}}\limits_{X \sim T}[X] = \tiny\begin{bmatrix}\scriptscriptstyle t_q\\ \scriptscriptstyle t_u\end{bmatrix}}}{D(T || P)}$.

We denote the fraction in the right hand side of Equation~\ref{eq:definition_F} as $F(t_q,t_u)$. Our goal is to provide a lower bound in the case $w_u = 1/2$.

First, notice that to satisfy $T\ll P$ (i.e. $supp(T)\subseteq supp(P)$) and $\mathop{\mathbb{E}}\limits_{X \sim T}[X]=\begin{bmatrix} t_q\\ t_u\end{bmatrix}$, the only available choice is $T=\begin{bmatrix} t_q & 0 \\ t_u-t_q & 1-t_u \\ \end{bmatrix}$. Plugging this in and expanding $F(t_u,t_q)$, we get
\begin{equation}\label{eq:F}
    F(t_u,t_q)=\frac{(t_u-t_q)\log \frac{t_u-t_q}{w_u-w_q}+\alpha\cdot d(t_u||w_u)-t_u\log \frac{t_u}{w_u}-\alpha\cdot d(t_q||w_q)+t_q\log\frac{t_q}{w_q}}{d(t_u||w_u)}.
\end{equation}
%Now note that $F(t_u, t_q)$ considered as a continuous function of $t_q$ defined on $0\le t_q<t_u$, whose infimum is either achieved at the boundary ($t_q=0$ and $t_q\to t_u^-$) or when $\partial F/\partial t_q = 0$.
% or when $t_u = t_q$. \textcolor{red}{haike write something here for the last case}
%
For $w_u = 1/2$ fixed, and for fixed $w_q, \alpha, t_u \neq 1/2,$ we can consider $F$ as a function of only $t_q$. Because $F$ is a continuously differentiable function in terms of $t_q$, the infimum of $F$ (for fixed $w_u, w_q, \alpha, t_u$) can only be achieved either when $\partial F/\partial t_q = 0,$ or at the boundary points ($t_q = 0$ or $t_q \to t_u^-$).

We first consider the case when the partial derivative is $0$ and handle the endpoint cases later.  Calculating the partial derivative of $F$ with respect to $t_q$ gives us $\frac{\partial F}{\partial t_q}=\log\frac{t_q}{w_q}-\log\frac{t_u-t_q}{w_u-w_q}-\alpha\log\frac{t_q(1-w_q)}{w_q(1-t_q)}$. When $\frac{\partial F}{\partial t_q}=0$, we must have 
\begin{equation}\label{eq:partial_derivative}
\frac{t_u-t_q}{w_u-w_q}=\left(\frac{t_q}{w_q}\right)^{1-\alpha}\left(\frac{1-t_q}{1-w_q}\right)^{\alpha}.
\end{equation}
Plugging in the relation in \eqref{eq:partial_derivative} and $w_u=\frac12$, we have 
\begin{align}
F(t_u,t_q)&= \alpha+\frac{(t_u-t_q)\log\frac{t_u-t_q}{w_u-w_q}-t_u\log\frac{t_u}{w_u}-\alpha\cdot d(t_q||w_q)+t_q\log\frac{t_q}{w_q}}{d(t_u||w_u)}\notag\\
&= \alpha+\frac{t_u\left(\log\frac{t_u-t_q}{w_u-w_q}-\log\frac{t_u}{w_u}\right)-t_q\log\frac{t_u-t_q}{w_u-w_q}-\alpha\cdot d(t_q||w_q)+t_q\log\frac{t_q}{w_q}}{d(t_u||w_u)}\notag\\
&= \alpha+\frac{t_u\left(\log\frac{t_u-t_q}{w_u-w_q}-\log\frac{t_u}{w_u}\right)-t_q\log\frac{t_u-t_q}{w_u-w_q}+(1-\alpha)\cdot t_q\log\frac{t_q}{w_q}-\alpha\cdot(1-t_q)\log\frac{1-t_q}{1-w_q}}{d(t_u||w_u)}\notag\\
&= \alpha+\frac{t_u\left(\log\frac{t_u-t_q}{w_u-w_q}-\log\frac{t_u}{w_u}\right)-\alpha\log\frac{1-t_q}{1-w_q}}{d(t_u||w_u)}\tag{Plugging in equation \ref{eq:partial_derivative}}\\
&=\alpha+\frac{t_u\left(\log  \left(1-\frac{t_q}{t_u}\right)+\log  \frac{1}{1-2w_q}\right)-\alpha\log\frac{1-t_q}{1-w_q}}{d(t_u||\frac12)}\tag{Plugging in $w_u=\frac12$}.
\end{align}
By Lemma~\ref{lm:shyam_lowerbound}, if we set $\alpha=1+\frac{1}{\log w_q}$ for $w_q$ is sufficiently small, we have $F(t_u,t_q)\ge\alpha-w^{1-\log 2-o(1)}_q$, uniformly over $t_u, t_q$. Next, let's check the boundary cases. For $t_q=0$, Lemma \ref{lem:endpoints} proves that $F(t_u,0)\ge \alpha-w^{1-o(1)}_q \ge \alpha-w^{1-\log 2-o(1)}_q$. For $t_q\to t^-_u$, because $\frac{\partial F}{\partial t_q}$ is continuous for $0\le t_q<t_u$ and $\lim\limits_{t_q\to t^-_u}\frac{\partial F}{\partial t_q}=+\infty$, the infimum of $F$ cannot be achieved when $t_q\to t^-_u$.

Thus, for $w_u = 1/2$, any fixed $w_q$ sufficiently small, $\alpha = 1 + \frac{1}{\log w_q},$ and any fixed $t_u \neq 1/2$ and  the infimum of $F$ across $0 \le t_q < t_u$ is at least $\alpha - w_q^{1-\log 2 - o(1)}$, where the $o(1)$ term goes to $0$ as $w_q$ goes to $0$, uniformly across $t_u, t_q$. So in fact, $F(t_u, t_q) \ge \alpha - w_q^{1-\log 2 - o(1)}$ uniformly across $t_u, t_q$. Applying this bound back to our original inequality in \ref{eq:definition_F} gives us the desired bound.
\end{proof}

Our goal is to now bound the fraction in the final step of the proof of Theorem \ref{thm:explicit_gapss_lowerbound}. The following lemma bound this fraction.

\begin{lemma}\label{lm:shyam_lowerbound}
Fix any constant $\delta > 0$. Suppose $w_q < 1$ is smaller than a sufficiently small constant $c = c_\delta$ that only depends on $\delta$, $w_u=1/2$, and $\alpha=1+\frac{1}{\log  w_q}$. Suppose these parameters satisfy the relation given in Equation \ref{eq:partial_derivative}. Then \[ \inf\limits_{\substack{t_q,t_u\in[0,1]\\ t_u\neq w_u}} \frac{t_u\left(\log  \left(1-\frac{t_q}{t_u}\right)+\log  \frac{1}{1-2w_q}\right)-\alpha\log\frac{1-t_q}{1-w_q}}{d(t_u||w_u)}\ge -w^{1-\log 2-\delta}_q.\]
Equivalently,
\[ \inf\limits_{\substack{t_q,t_u\in[0,1]\\ t_u\neq w_u}} \frac{t_u\left(\log  \left(1-\frac{t_q}{t_u}\right)+\log  \frac{1}{1-2w_q}\right)-\alpha\log\frac{1-t_q}{1-w_q}}{d(t_u||w_u)}\ge -w^{1-\log 2-o(1)}_q\]
for sufficiently small $w_q$, where $o(1)$ denotes a term that uniformly goes to $0$ as $w_q \rightarrow 0$ (regardless of $t_q, t_u$).
\end{lemma}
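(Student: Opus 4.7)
The plan is to carry out a careful case analysis of the function
\[G(t_u,t_q):=\frac{t_u\bigl(\log(1-t_q/t_u)+\log\tfrac{1}{1-2w_q}\bigr)-\alpha\log\tfrac{1-t_q}{1-w_q}}{d(t_u\|1/2)},\]
using the constraint \eqref{eq:partial_derivative} to eliminate one variable. First I would reparametrize with $\gamma:=1-\alpha=-1/\log w_q$, which is small and positive for $w_q$ small, and write \eqref{eq:partial_derivative} as $t_u-t_q=(1/2-w_q)(t_q/w_q)^{\gamma}\bigl((1-t_q)/(1-w_q)\bigr)^{1-\gamma}$. Approximating $(1-w_q)^{\gamma-1}=1+O(w_q)$, the constraint reduces, up to negligible factors, to $t_u\approx t_q+\tfrac12(t_q/w_q)^{\gamma}(1-t_q)$. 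Observe that since $0\le t_q<t_u\le 1$, the term $\log\frac{1}{1-2w_q}$ contributes a harmless positive $O(w_q)$ piece to the numerator, and the only source of large negativity is $t_u\log(1-t_q/t_u)$.

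Next I would split into regimes according to how close $t_u$ is to $1/2$, since $d(t_u\|1/2)$ is the denominator that controls the ratio. In the regime where $|t_u-1/2|\ge c$ for a constant $c>0$, the denominator is $\Omega(1)$, so it suffices to bound the numerator. Using the constraint to substitute for $t_u-t_q$, each logarithmic term in the numerator becomes a function of $t_q,w_q,\gamma$ alone, and one can verify that the negative part is at worst $O\bigl(\gamma\cdot|\log w_q|\bigr)=O(1)$ in magnitude, but much of it cancels against $-\alpha\log\frac{1-t_q}{1-w_q}$ once \eqref{eq:partial_derivative} is plugged in; what remains is $O(w_q^{1-\log 2-\delta/2})$, as needed. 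In the regime $|t_u-1/2|<c$, I would use the sharp expansion $d(t_u\|1/2)=2(t_u-1/2)^2(1+o(1))$ and match it against a quadratic lower bound on the numerator obtained by a Taylor expansion of the logs around the pair $(t_u,t_q)=(1/2,w_q)$.

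The crux is then to optimize over the remaining one-parameter family once the constraint has been imposed. Parameterizing by $r:=t_q/w_q$, the numerator scales like $w_q\cdot f(r,\gamma)$ for an explicit $f$, and the constraint fixes $t_u-1/2$ as an explicit function of $r$ and $\gamma$. The candidate worst case is a scaling of the form $r\asymp w_q^{-(1-\log 2)}$ (equivalently, $t_q\asymp w_q^{\log 2}$), which is precisely the balance point where the linear-in-$t_q$ piece of the numerator meets the quadratic-in-$(t_u-1/2)$ denominator. Plugging this scaling into the expression and using $\gamma\log r=(1-\log 2+o(1))$ and $(t_q/w_q)^{\gamma}=r^{\gamma}=2^{1-o(1)}$ at the extremum yields the claimed bound $G\ge -w_q^{1-\log 2-\delta}$, uniform in $(t_u,t_q)$.

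The main obstacle I anticipate is the careful bookkeeping in the second regime: one must show that the $o(1)$ error terms in the Taylor expansions of both the numerator and the denominator can be absorbed into an arbitrarily small additive $\delta$ in the exponent, uniformly over all admissible $(t_u,t_q)$ satisfying the constraint, even as $t_q\to 0$ or $t_u\to 1/2$. This requires replacing the naive Taylor remainders by quantitative bounds that degrade gracefully as $w_q\to 0$, and verifying that the saddle of $G$ under the constraint is genuinely interior (rather than at a boundary), which together pin down the exponent $1-\log 2$.
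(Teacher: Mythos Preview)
Your high-level plan—reduce to one variable via the constraint \eqref{eq:partial_derivative} and split into cases according to $|t_u-1/2|$—is close in spirit to the paper's argument, which parametrizes instead by writing $t_q=w_q^{1+\tau}$ (so that $(t_q/w_q)^{1-\alpha}=e^{-\tau}$ and hence $t_u\approx e^{-\tau}/2$) and then splits into three regimes according to the size of $|\tau|$.

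However, your identification of where the exponent $1-\log 2$ arises is incorrect, and the error is structural rather than a matter of bookkeeping. You posit an interior saddle at $t_q\asymp w_q^{\log 2}$ and explicitly plan to verify ``that the saddle of $G$ under the constraint is genuinely interior (rather than at a boundary).'' In fact the extremum is \emph{at} a boundary: the requirement $t_u<1$ forces $t_q<x^*$ with $x^*\le w_q^{1-\log 2-o(1)}$ (this is Proposition~\ref{prop:bounds} in the paper), and the weakest bound on $b/a$ occurs as $t_q\to x^*$, $t_u\to 1$. There the denominator $a=H(t_u)$ is $\Omega(1)$ while the numerator satisfies $|b|\le O(t_q+w_q)\le O(x^*)=w_q^{1-\log 2-o(1)}$, and that is precisely the source of the exponent. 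At your candidate point $t_q\asymp w_q^{\log 2}$ one instead computes $(t_q/w_q)^{1-\alpha}=e^{1-\log 2}=e/2$ (not $2^{1-o(1)}$ as you wrote), so $t_u\approx e/4$ is bounded away from $1/2$, the denominator is $\Omega(1)$, and $|b|=O(t_q)=O(w_q^{\log 2})$—a strictly stronger bound than the lemma claims. You have interchanged the roles of $\log 2$ and $1-\log 2$.

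Relatedly, your ``balance'' heuristic—matching a linear-in-$t_q$ numerator against a quadratic-in-$(t_u-1/2)$ denominator—points in the wrong direction. The paper shows that in the regime $t_u\approx 1/2$ (equivalently, $t_q$ within a $1\pm O(1/|\log w_q|)$ multiplicative window of $w_q$), much finer cancellation occurs and in fact $b/a\ge -w_q^{1-o(1)}$, far stronger than needed. The bottleneck is the opposite regime: $t_u$ bounded away from $1/2$ with $t_q$ pushed to the largest value the constraint permits. Your plan as written would not reach this conclusion, since it searches for an interior critical point that does not exist and omits the crucial domain restriction $t_q<x^*\le w_q^{1-\log 2-o(1)}$.
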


In the rest of this section, we use $o(1)$ to denote any term that goes to $0$ as $w_q \to 0,$ uniformly over $t_q, t_u$.
We will prove some auxiliary lemmas before proving Lemma \ref{lm:shyam_lowerbound}. We first define the following function $H(x)$.

\begin{definition}
    For a value $x$, $H(x) := x \log  (2x) + (1-x) \log  (2(1-x))$.
\end{definition}

It is clear that $H(x)$ is only defined when $0 < x < 1$. Moreover, we note the following basic property and provide its proof for completeness.

\begin{proposition} \label{prop:entropy}
    For any $x \in (0, 1)$, $2(\frac{1}{2}-x)^2 \le H(x) \le 16(\frac{1}{2}-x)^2$. 
\end{proposition}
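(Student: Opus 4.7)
The plan is to recognize that $H(x)$ is precisely the binary KL divergence $d(x\|1/2)$. Indeed, writing $P=(x,1-x)$ and $Q=(1/2,1/2)$, direct expansion gives
\[
D(P\|Q)=x\log\tfrac{x}{1/2}+(1-x)\log\tfrac{1-x}{1/2}=x\log(2x)+(1-x)\log(2(1-x))=H(x).
\]
Once this identification is made, both bounds follow from standard KL-divergence estimates applied to this binary setting.

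For the lower bound $H(x)\ge 2(\tfrac12-x)^2$, I would invoke Pinsker's inequality $D(P\|Q)\ge 2\,TV(P,Q)^2$. For our two binary distributions, $TV(P,Q)=\tfrac12(|x-\tfrac12|+|(1-x)-\tfrac12|)=|x-\tfrac12|$, which immediately delivers the claim. A self-contained alternative (in case the authors prefer not to cite Pinsker) is Taylor's theorem at $x=1/2$: one checks $H(1/2)=0$, $H'(x)=\log(x/(1-x))$ so $H'(1/2)=0$, and $H''(t)=1/(t(1-t))\ge 4$ on $(0,1)$; the integral form of the remainder then yields $H(x)\ge \int_{1/2}^{x}(x-t)\cdot 4\,dt=2(x-\tfrac12)^2$ (and symmetrically for $x<1/2$).

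For the upper bound $H(x)\le 16(\tfrac12-x)^2$, I would use the chi-squared domination of KL divergence: the inequality $\log y\le y-1$ gives $D(P\|Q)\le \chi^2(P,Q)=\sum_i (P_i-Q_i)^2/Q_i$. Plugging in our binary $P,Q$ yields
\[
\chi^2(P,Q)=\frac{(x-1/2)^2}{1/2}+\frac{((1-x)-1/2)^2}{1/2}=4(\tfrac12-x)^2,
\]
which is in fact strictly sharper than the claimed bound with constant $16$, so the factor $16$ is simply slack (presumably chosen for convenience of some downstream calculation).

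There is no real obstacle in this proof: once one observes $H(x)=d(x\|1/2)$, both inequalities are immediate from textbook KL-divergence estimates (Pinsker and KL\,$\le\,\chi^2$). The only thing to be careful about is the slight abuse that $H(x)$ should be interpreted by continuity at the endpoints $x\in\{0,1\}$ (where the term $0\log 0$ is taken to be $0$), after which both bounds extend continuously to the closed interval.
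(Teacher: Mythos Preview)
Your proof is correct. For the lower bound, your Taylor-remainder alternative is exactly the paper's argument: the paper also checks $H(1/2)=H'(1/2)=0$ and $H''(x)=1/(x(1-x))\ge 4$ to conclude $H(x)\ge 2(\tfrac12-x)^2$. Your Pinsker formulation is just a packaged version of the same fact.

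For the upper bound, your route is genuinely different and cleaner. The paper does not use the $D(P\|Q)\le\chi^2(P,Q)$ inequality; instead it splits into two cases, bounding $H''(x)\le 16/3$ on $[1/4,3/4]$ to get $H(x)\le \tfrac{8}{3}(\tfrac12-x)^2$ there, and then observing that on the complement $H(x)\le \log 2$ while $(\tfrac12-x)^2\ge 1/16$, so $H(x)\le 16(\tfrac12-x)^2$. Your $\chi^2$ bound yields the sharper constant $4$ uniformly in a single line, which explains the slack in the paper's constant $16$: that constant is an artifact of the case split, not of the function itself. Either approach suffices for the downstream applications, where only $H(x)=\Theta((\tfrac12-x)^2)$ is used.
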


\begin{proof}

    It is simple to check that $H(1/2) = 0$ and $H'(1/2) = 0$. Moreover, the second derivative is $H''(x) = \frac{1}{x}+\frac{1}{1-x} = \frac{1}{x-x^2}$. For $0 < x < 1$, $x-x^2 \le 1/4,$ so $H''(x) \ge 4$ for all $x$. Thus, $H''(x) \ge 2 (\frac{1}{2}-x)^2$.

    Next, we have that $x-x^2 \ge \frac{3}{16}$ for $x \in [1/4, 3/4]$, which means $H''(x) \le \frac{16}{3}$ for $x \in [1/4, 3/4]$. Thus, $H(x) \le \frac{8}{3} (\frac{1}{2}-x)^2$ for $x \in [1/4, 3/4]$. Since the first derivative $H'(x) = \log (x)-\log (1-x)$ is negative for $x < \frac{1}{2}$ and positive for $x > \frac{1}{2}$, this means $H(x)$ is maximized as $x$ approaches either $0$ or $1$. But the limits equal $\log  2$, so $H(x) \le \log 2$ for all $x$. Since $(\frac{1}{2}-x)^2 \ge \frac{1}{16}$ for all $1 > x > \frac{3}{4}$ or $0 < x < \frac{1}{4}$, this means for any such $x$, $H(x) \le \log  2 \le 16 \log  2 \cdot (\frac{1}{2}-x)^2 \le 16 \cdot (\frac{1}{2}-x)^2$. So in either case, $H(x) \le 16 (\frac{1}{2}-x)^2$.
\end{proof}

We are now ready to prove Lemma \ref{lm:shyam_lowerbound}.

\begin{proof}[Proof of Lemma \ref{lm:shyam_lowerbound}]

For simplicity of notation, let $x = t_q$. Recalling the value of $\alpha$, $w_u$, and Equation~\ref{eq:partial_derivative}, we have
\[ t_u = x + \left(\frac{x}{w_q}\right)^{1-\alpha}\cdot\left(\frac{1-x}{1-w_{q}}\right)^{\alpha}\cdot\left(\frac{1}{2}-w_q\right).\]
Now we denote the fraction in the lemma statement as $b(x)/a(x)$ and we note
\begin{align}
    a &= H(t_u) = t_u\cdot\log \left(2t_u\right)+\left(1-t_u\right)\cdot\log \left(2\left(1-t_u\right)\right), \nonumber \\
    b &= -\alpha \cdot \log \left(\frac{1-x}{1-w_{q}}\right)+t_u\cdot\left(\log \left(1-\frac{x}{t_u}\right) + \log \left(\frac{1}{1-2w_q}\right)\right). \label{eq:parameters}
\end{align}
 From Proposition \ref{prop:bounds}, it suffices to check the following cases:
\begin{enumerate}
    \item $0 < x \le w_q^{1.01}$, or $w_q^{0.99} \le x < x^*$,
    \item $w_q^{1.01} < x < (1+\frac{1}{\log  w_q}) w_q$, or $(1-\frac{1}{\log  w_q}) w_q < x < w_q^{0.99}$,
    \item and $(1+\frac{1}{\log  w_q}) w_q \le x \le (1-\frac{1}{\log  w_q}) w_q$.
\end{enumerate}
In Case 1, $x^*$ is such that $x \in (0, x^*)$ is the regime of $x$ such that $a$ and $b$ are well defined. Proposition \ref{prop:bounds} further states that $x^*$ only depends on $w_q$ and $x^* \le w_q^{1-\log  2-o(1)}$ as $w_q \to 0$.

The cases are handled in Lemmas~\ref{lem:case-1}, \ref{lem:case-2}, and \ref{lem:case-3}, respectively. Combining them proves Lemma \ref{lm:shyam_lowerbound}.
\end{proof}

\begin{proposition}\label{prop:bounds}
    Suppose that $0 < w_q < \frac{1}{3}$ and $0 \le x \le 1$. Then, the regime of $x$ such that $a, b$ are well defined is $x \in (0, x^*)$, where $0 < x^* < 1$ only depends on $w_q$ and $x^* \le w_q^{1-\log  2-o(1)}$ as $w_q \to 0$.
\end{proposition}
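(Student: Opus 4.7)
The plan is to identify $x^*$ as the point where the constraint $t_u(x) < 1$ first fails, and then bound it via a leading-order asymptotic calculation. First I check which conditions are actually binding. The function $a = H(t_u)$ is finite exactly when $t_u \in (0, 1)$, and the expression for $b$ in \eqref{eq:parameters} additionally requires $x < t_u$, $x < 1$, and $w_q < 1/2$ (the last of which follows from the assumption $w_q < 1/3$). Writing $t_u = x + g(x)$ with
\[ g(x) \;:=\; (x/w_q)^{1-\alpha}\bigl((1-x)/(1-w_q)\bigr)^{\alpha}(1/2 - w_q), \]
note $g(x) > 0$ strictly for $x \in (0, 1)$, so $0 < x < t_u$ is automatic. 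Hence the only potentially binding condition is $t_u(x) < 1$, and I define
\[ x^* \;:=\; \inf\{x \in (0, 1) : t_u(x) \ge 1\}, \]
so that $t_u(x) < 1$ on $(0, x^*)$ by continuity. Thus $a, b$ are well-defined throughout $(0, x^*)$, and $x^*$ depends only on $w_q$.

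The key quantitative claim is $x^* \le w_q^{1 - \log 2 - o(1)}$. To see this, fix $\eps > 0$, set $\beta := 1 - \log 2 - \eps$, and take $x := w_q^{\beta}$. Recalling $1 - \alpha = -1/\log w_q$,
\[ (x/w_q)^{1-\alpha} \;=\; w_q^{(\beta - 1)(1 - \alpha)} \;=\; w_q^{(1-\beta)/\log w_q} \;=\; e^{\,1 - \beta} \;=\; 2 e^{\eps}, \]
where the third equality uses the identity $w_q^{c/\log w_q} = e^{c}$. The remaining factors of $g(x)$ satisfy $\bigl((1-x)/(1-w_q)\bigr)^{\alpha} \to 1$ and $1/2 - w_q \to 1/2$ as $w_q \to 0$ (with $x = w_q^{\beta} \to 0$), so $g(x) \to e^{\eps}$ and therefore $t_u(x) \to e^{\eps} > 1$. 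Thus for $w_q$ below a threshold depending only on $\eps$, we have $t_u(w_q^{1 - \log 2 - \eps}) > 1$, giving $x^* \le w_q^{1 - \log 2 - \eps}$. Since $\eps > 0$ was arbitrary, $x^* \le w_q^{1 - \log 2 - o(1)}$. This also shows $x^* < 1$ in the small-$w_q$ regime; $x^* > 0$ follows from $\lim_{x \to 0^+} t_u(x) = 0 < 1$ and continuity, and for $w_q$ merely in $(0, 1/3)$ one can also obtain $x^* < 1$ by noting $t_u'(x) \to -\infty$ as $x \to 1^-$ (since $\alpha < 1$), which forces $t_u > 1$ somewhere in $(0, 1)$.

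The main conceptual obstacle is spotting the identity $w_q^{c/\log w_q} = e^{c}$, which converts the messy condition $t_u(x) = 1$ into the clean exponent $1 - \log 2$; everything else is routine asymptotic book-keeping (controlling the correction factors uniformly as $x = w_q^{\beta}$ varies). A minor additional point is verifying that $(0, x^*)$ is the \emph{entire} well-defined regime rather than just an initial piece; this can be handled via a brief calculus argument showing $g$ is increasing on $(0, 1 - \alpha)$, which for small $w_q$ contains $x^*$, so $t_u$ is strictly increasing through $x^*$, together with a short lower bound on $g$ on $[x^*, 1)$ to show $t_u \ge 1$ there. This auxiliary monotonicity step is not needed for the downstream use of the proposition inside Lemma~\ref{lm:shyam_lowerbound}, which depends only on the asymptotic upper bound on $x^*$.
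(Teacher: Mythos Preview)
Your asymptotic computation is correct and is essentially the same calculation the paper performs, just packaged differently: the paper rearranges $t_u(x)<1$ algebraically to the equivalent condition $\tfrac{1-x}{x}>C(w_q)$ with
\[
C(w_q)=\left(\frac{e(1/2-w_q)}{(1-w_q)^{1+1/\log w_q}}\right)^{-\log w_q}=w_q^{\log 2-1\pm o(1)},
\]
and then sets $x^*=\tfrac{1}{1+C(w_q)}$. Your test-point argument with $x=w_q^{1-\log 2-\eps}$ recovers the same exponent via the identity $w_q^{c/\log w_q}=e^c$.

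Where the two routes differ is in establishing that the well-defined regime is \emph{exactly} $(0,x^*)$. The paper's rearrangement gives this immediately: since $(1-x)/x$ is strictly decreasing, the inequality $(1-x)/x>C(w_q)$ cuts out an initial interval. Your sketch via monotonicity of $g$ on $(0,1-\alpha)$ handles $(0,1-\alpha)$ but leaves $[1-\alpha,1)$ to an unspecified ``short lower bound on $g$''; the cleanest way to finish is precisely the paper's observation that $(1-x)/g(x)$ is a decreasing function of $x$, which is the same rearrangement.

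More importantly, your final remark that this step ``is not needed for the downstream use'' is not right. In the proof of Lemma~\ref{lm:shyam_lowerbound}, the case split covers $x\in(0,x^*)$ and nothing more; if the well-defined regime extended beyond your $x^*=\inf\{x:t_u(x)\ge 1\}$ (i.e., if $t_u$ dipped back below $1$ for some larger $x$), those values of $x$ would be unaccounted for. What the downstream argument needs is that the well-defined regime is \emph{contained} in an interval of length $w_q^{1-\log 2-o(1)}$, and your single test point does not give that without the monotonicity. So the ``auxiliary'' step is in fact load-bearing; the paper's algebraic route is the tidy way to supply it.
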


\begin{proof}
    We must have that $0 < t_u < 1$, so that $a = H(t_u)$ is well-defined. If $x = 0$, $t_u = 0$, and if $x = 1$, then $\log  \left(\frac{1-x}{1-w_q}\right)$ isn't defined, so $b$ isn't defined.
    If $0 < x < 1$, $t_u$ is always positive, so $a$ being well-defined is equivalent to $1-t_u > 0$, which is equivalent to
\[1-x > \left(\frac{x}{w_q}\right)^{1-\alpha}\cdot\left(\frac{1-x}{1-w_{q}}\right)^{\alpha}\cdot\left(\frac{1}{2}-w_q\right).\]
    We can rearrange this as
\[\left(\frac{1-x}{x}\right)^{-1/(\log  w_q)} = \left(\frac{1-x}{x}\right)^{1-\alpha} > \frac{1/2 - w_q}{w_q^{1-\alpha} (1-w_q)^\alpha} = \frac{e(1/2 - w_q)}{(1-w_q)^{1+1/(\log  w_q)}}.\]
    We can again rearrange this as 
\[\frac{1-x}{x} > C(w_q) := \left(\frac{e(1/2-w_q)}{(1-w_q)^{1+1/(\log  w_q)}}\right)^{-\log  w_q},\]
    where $C(w_q)$ is positive if $w_q < 1/3$. This is equivalent to is equivalent to $0 < x < \frac{1}{C(w_q)+1}.$ So, we can set $x^* = \frac{1}{C(w_q)+1}$. Thus, $a$ is well defined if and only if $0 < x < x^*$, where $x^* \in (0, 1)$ clearly holds. Moreover, if $w_q < 1/3$, then $\log  \frac{1-x}{1-w_q}$ and $\log  \frac{1}{1-2w_q}$ are well-defined, and $t_u > x > 0$ so $\log  \left(1 - \frac{x}{t_u}\right)$ is also well-defined. In summary, $a, b$ are well defined if and only if $0 < x < x^* = \frac{1}{1+C(w_q)}$.

    Finally, we bound $x^*$ for $w_q$ sufficiently small. Note that $(1-w_q)^{1+1/(\log  w_q)} = 1+o(1)$ and $1/2 - w_q = 1/2 - o(1)$, so $C(w_q) = \left(\frac{e}{2} \cdot (1 \pm o(1))\right)^{-\log  w_q} = (e/2)^{-\log  w_q \cdot (1 \pm o(1))} = w_q^{\log  2 - 1 \pm o(1)}$. Thus, $C(w_q) \ge w_q^{\log  2 - 1 + o(1)}$, which means that $x^* \le w_q^{1 - \log  2 - o(1)}$.
\end{proof}

\begin{lemma} \label{lem:case-1}
    Suppose that $w_q$ is sufficiently small, and $0 < x \le w_q^{1.01}$, or $w_q^{0.99} \le x < x^*$. Then, $\frac{b}{a} \ge -w_q^{1-\log  2-o(1)}$ as $w_q \to 0$.
\end{lemma}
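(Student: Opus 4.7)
The plan is to exploit the fact that in Case 1, $t_u$ stays bounded away from $1/2$ by a positive constant, so $a = H(t_u) = \Omega(1)$, while $|b|$ can be bounded as $O(w_q^{1-\log 2 - o(1)})$. Together these yield $b/a \ge -|b|/a \ge -w_q^{1-\log 2 - o(1)}$, absorbing the constant factor into the $o(1)$ in the exponent.

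First, I would introduce the change of variables $y := \log(x/w_q)/\log w_q$, so that $(x/w_q)^{1-\alpha} = e^{-y}$ (using $1-\alpha = -1/\log w_q$). The formula for $t_u$ from (\ref{eq:partial_derivative}) with $w_u = 1/2$ becomes
\[ t_u = x + \left(\tfrac{1-x}{1-w_q}\right)^\alpha \left(\tfrac{1}{2} - w_q\right) e^{-y}. \]
Since $x \le x^* \le w_q^{1-\log 2 - o(1)} = o(1)$ by Proposition~\ref{prop:bounds} and $\alpha \to 1$, the factor $((1-x)/(1-w_q))^\alpha \to 1$ and $(1/2-w_q) \to 1/2$ uniformly, yielding $t_u = \tfrac{1}{2}(1+o(1)) e^{-y} + o(1)$. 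In the subregime $x \le w_q^{1.01}$ we have $y \ge 0.01$, so $t_u \le \tfrac{1}{2} e^{-0.01}(1+o(1)) + o(1) \le 1/2 - c_1$ for some absolute $c_1 > 0$ and all sufficiently small $w_q$. In the subregime $w_q^{0.99} \le x < x^*$ we have $y \le -0.01$, so $t_u \ge \tfrac{1}{2} e^{0.01}(1-o(1)) \ge 1/2 + c_2$ for some absolute $c_2 > 0$. Proposition~\ref{prop:entropy} then gives $a = H(t_u) \ge 2\min(c_1,c_2)^2 = \Omega(1)$ throughout Case 1.

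Next, I would bound each of the three summands in the expression (\ref{eq:parameters}) for $b$. For $|z| \le 1/2$, $|\log(1-z)| \le 2|z|$, and since $\alpha \le 1$ for small $w_q$, the first summand satisfies $|\alpha \log((1-x)/(1-w_q))| \le 2(x+w_q)$. The third summand satisfies $|t_u \log(1/(1-2w_q))| \le 4w_q$. For the middle summand $t_u \log(1-x/t_u)$, I would verify that $x/t_u = o(1)$ uniformly: in the first subregime, writing $x = w_q^{1+y}$ and using $t_u \ge \tfrac{1}{2} e^{-y}(1-o(1))$ gives $x/t_u \le O(w_q(ew_q)^y) \le O(w_q)$ for $w_q < 1/e$; in the second subregime, $t_u = \Omega(1)$ while $x = o(1)$, so $x/t_u = o(1)$. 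This gives $|t_u \log(1-x/t_u)| \le 2x$. Combining, $|b| \le O(x+w_q) \le O(x^* + w_q) \le w_q^{1-\log 2 - o(1)}$ by Proposition~\ref{prop:bounds}.

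The main obstacle is maintaining uniform control over $t_u$ and $x/t_u$ across each subregime, particularly near the boundaries where $y$ is near $\pm 0.01$ (so $t_u$ would be close to $1/2$) and near $x = x^*$ (where $t_u$ approaches $1$). The cushion of $0.01$ in the exponents of $w_q$ in the hypotheses is precisely what separates $t_u$ from $1/2$ by an absolute positive constant after the change of variables, while the bound $x/t_u = O(w_q)$ in the first subregime requires the observation that $w_q^{1+y} e^y$ decays in $y$ because $e w_q < 1$. The excluded intermediate range $w_q^{1.01} < x < w_q^{0.99}$ is exactly where $y \in (-0.01, 0.01)$ and $t_u$ does approach $1/2$, and it requires the more delicate analyses carried out in Cases 2 and 3.
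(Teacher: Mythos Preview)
Your proof is correct and follows essentially the same approach as the paper: you use the substitution $x=w_q^{1+y}$ (the paper writes $\gamma$ for your $y$) to show $t_u$ stays a constant away from $1/2$ so $a=H(t_u)=\Omega(1)$, and then bound $|b|=O(x+w_q)\le w_q^{1-\log 2-o(1)}$ termwise. The only minor difference is in bounding $x/t_u$: the paper shows the cruder $t_u\ge (7/6)x$ uniformly (via $(x/w_q)^{1-\alpha}\ge x^{1-\alpha}\ge x$), whereas you split the two subregimes and obtain the sharper $x/t_u=O(w_q)$ in the first via $w_q(ew_q)^y\le w_q$; both routes yield $|t_u\log(1-x/t_u)|=O(x)$, so the arguments are effectively equivalent.
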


\begin{proof}
Since $x$ is strictly positive, we can write $x = w_q^{1+\gamma},$ for some real $\gamma$ with $|\gamma| \ge 0.01$. Then, $\frac{x}{w_q} = w_q^{\gamma}$, so $\left(\frac{x}{w_q}\right)^{1-\alpha} = w_q^{-\gamma/\log  w_q} = e^{-\gamma}$.
%So, $t_u = x + e^{-\gamma} \cdot \left(\frac{1-x}{1-w_q}\right)^\alpha \cdot \left(\frac{1}{2}-w_q\right)$. Assuming that 
Finally, since $x = o(1)$ and $w_q = o(1)$, this means $t_u = x + e^{-\gamma} \cdot \left(\frac{1}{2} \pm o(1)\right) = 0.5 \cdot e^{-\gamma} \pm o(1) \cdot (e^{-\gamma}+1)$. Since $|\gamma| > 0.01$, this implies that, for $w_q$ sufficiently small, $|t_u-0.5| \ge \Omega(1)$, which means $a = H(t_u) = \Omega(1)$ by \Cref{prop:entropy}.

We can also bound $b$ as follows. Note that $\left|\log  \left(\frac{1-x}{1-w_q}\right)\right| = O(x+w_q)$, so $-\alpha \cdot \log  \left(\frac{1-x}{1-w_q}\right) \ge -O(x+w_q)$.
Next, note that for $w_q$ sufficiently small (and thus $x < x^*$ is also sufficiently small), since $0 < \alpha, 1-\alpha < 1$, $\left(\frac{1-x}{1-w_q}\right)^\alpha \ge 0.5$, and $\left(\frac{x}{w_q}\right)^{1-\alpha} \ge x^{1-\alpha} \ge x$. Also, $\frac{1}{2}-w_q \ge \frac{1}{3}$. Thus, $t_u \ge x + x \cdot \frac{1}{2} \cdot \frac{1}{3} \ge \frac{7}{6} \cdot x$. Therefore, $0 \le \frac{x}{t_u} \le \frac{6}{7},$ which means $\left|\log \left(1 - \frac{x}{t_u}\right)\right| \le O(x/t_u)$. Thus, $\left|t_u \cdot \left(\log \left(1 - \frac{x}{t_u}\right) + \log \left(\frac{1}{1-2w_q}\right)\right)\right| \le O(t_u \cdot (x/t_u + w_q)) = O(x+w_q)$.
%\xnote{why this is non-negative? it seems that the first term is negative.}
Thus, $|b| \le O(x+w_q)$.
%\snote{yeah you are right - I changed the argument a bit, this paragraph should be correct now.}\xnote{Do we need a more fine-grained bound for $t_u$? Because if $x/t_u$ is not approaching $0$, we can't get the $|\log (1-x/t_u)|$ bound }

Since $a = \Omega(1)$ is positive, and $b \ge -O(x+w_q)$, this means that $\frac{b}{a} \ge -O(x+w_q)$. Since we know that $x < x^* \le w_q^{1-\log  2 - o(1)}$, this implies that $\frac{b}{a} \ge -w_q^{1-\log  2 - o(1)}$.    
\end{proof}

\begin{lemma} \label{lem:case-2}
    Suppose that $w_q^{1.01} < x < (1+\frac{1}{\log  w_q}) w_q$, or $(1-\frac{1}{\log  w_q}) w_q < x < w_q^{0.99}$. Then, $\frac{b}{a} \ge -w_q^{0.99-o(1)}$.
\end{lemma}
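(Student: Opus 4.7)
The plan is to mimic the Case~1 strategy---crudely upper bound $|b|$ and lower bound $a$ via \cref{prop:entropy}---while tracking the smaller value of $a$ that now arises because $t_u$ is close to (but not equal to) $1/2$ in this intermediate regime. Set $u := -\log w_q$ and, as before, write $x = w_q^{1+\gamma}$. The case hypothesis $|x - w_q| \ge w_q/u$, combined with $x \in (w_q^{1.01}, w_q^{0.99})$, translates via a short log-computation into
\[|\gamma| \in \bigl[(1 + O(1/u))/u^2,\; 0.01\bigr].\]

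Next I would compute $t_u$. Recycling the Case~1 identity $(x/w_q)^{1-\alpha} = e^{-\gamma}$, together with $((1-x)/(1-w_q))^\alpha = 1 + O(x + w_q)$ and $1/2 - w_q = (1 + O(w_q))/2$, the formula for $t_u$ yields
\[t_u = x + \tfrac12 e^{-\gamma}\bigl(1 + O(x + w_q)\bigr) = \tfrac12 - \tfrac{\gamma}{2} + O(\gamma^2) + O(w_q^{0.99}),\]
where we used $x + w_q = O(w_q^{0.99})$ throughout Case~2. The key observation is a separation of scales: $|\gamma| \ge 1/u^2$ is only \emph{polynomially} small in $u$, whereas $w_q^{0.99} = e^{-0.99 u}$ is \emph{exponentially} small in $u$, and $|\gamma|^2 \le 10^{-2} |\gamma|$, so the $-\gamma/2$ term dominates both error terms. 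Hence $|t_u - 1/2| \ge |\gamma|/4$ for all sufficiently small $w_q$, and \cref{prop:entropy} gives $a = H(t_u) \ge 2(t_u - 1/2)^2 \ge \gamma^2/8$.

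For the upper bound on $|b|$ I would re-run the Case~1 estimates essentially verbatim: $|\log((1-x)/(1-w_q))| = O(x + w_q)$ controls the first summand of $b$, $|t_u \log(1 - x/t_u)| = O(x)$ (using $t_u \ge 1/4$ to ensure $x/t_u \le 4x$) controls the second, and $|t_u \log(1/(1-2 w_q))| = O(w_q)$ controls the third, giving $|b| \le O(x + w_q) \le O(w_q^{0.99})$.

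Combining the two bounds, $|b|/a \le O(w_q^{0.99})/(\gamma^2/8) \le O(u^4 w_q^{0.99})$, and since $u^4 = (\log(1/w_q))^4 = w_q^{-o(1)}$, this is $w_q^{0.99 - o(1)}$, yielding $b/a \ge -w_q^{0.99 - o(1)}$ as desired. The main obstacle is the $t_u$ expansion: one must verify that the error terms $O(\gamma^2) + O(w_q^{0.99})$ in $t_u - 1/2 + \gamma/2$ are of strictly smaller order than $|\gamma|$ itself, uniformly over the Case~2 regime, and this relies entirely on the polynomial-vs-exponential separation $1/u^2 \gg w_q^{0.99}$ together with $|\gamma| \le 0.01$.
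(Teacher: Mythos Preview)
Your proposal is correct and essentially identical to the paper's own proof: both write $x = w_q^{1+\gamma}$, deduce $|\gamma| \ge \Omega(1/(\log w_q)^2)$ from the case hypothesis, expand $t_u = \tfrac12 - \Theta(\gamma)$ so that $a = H(t_u) = \Theta(\gamma^2) \ge \Omega((\log w_q)^{-4})$, and combine this with the crude bound $|b| = O(w_q^{0.99})$ to get $b/a \ge -w_q^{0.99 - o(1)}$. Your write-up is slightly more explicit about the separation-of-scales argument ensuring $|t_u - 1/2| \ge |\gamma|/4$, but the logic is the same.
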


\begin{proof}
We again write $x = w_q^{1+\gamma}$, where this time $|\gamma| < 0.01$. Since either $x > (1-\frac{1}{\log  w_q}) w_q$ or $x < (1+\frac{1}{\log  w_q}) w_q$, this means that $|\gamma| \ge \Omega\left(\frac{1}{(\log  w_q)^2}\right)$. Then, we can write
\[t_u = O(w_q^{0.99}) + w_q^{-\gamma/(\log  w_q)} \cdot \left(\frac{1}{2}\pm O(w_q^{0.99})\right) = \frac{e^{-\gamma}}{2} \pm O(w_q^{0.99}) = \frac{1}{2} - \Theta(\gamma),\]
since $0.01 > |\gamma| \ge \Omega\left(-\frac{1}{\log  w_q}\right)$ so the $w_q^{0.99}$ term is negligible compared to $\gamma$. So, $a = H(t_u) = \Theta(\gamma^2) \ge \Omega\left(1/(\log  w_q)^4\right)$, by~\Cref{prop:entropy}.

Next, to bound $b$, note that $\left|\log  \left(\frac{1-x}{1-w_q}\right)\right| = O(x+w_q) \le O(w_q^{0.99})$. Also, since $t_u = \frac{e^{-\gamma}}{2} \pm O(w_q^{0.99})$, this means that $t_u = \Theta(1)$, so $\left|t_u\cdot\left(\log \left(1-\frac{x}{t_u}\right) + \log \left(\frac{1}{1-2w_q}\right)\right)\right| \le O(x + w_q) = O(w_q^{0.99})$. Overall, $|b| \le O(w_q^{0.99})$, so $\frac{b}{a} \ge -w_q^{0.99-o(1)}.$
\end{proof}

\begin{lemma} \label{lem:case-3}
    Suppose that $(1+\frac{1}{\log  w_q}) w_q \le x \le (1-\frac{1}{\log  w_q}) w_q$. Then, $\frac{b}{a} \ge -w_q^{1-o(1)}.$
\end{lemma}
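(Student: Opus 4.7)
Here is a plan. The setup is $w_u = 1/2$, $L := -\log w_q \to \infty$, $\alpha = 1 - 1/L$, and $x = t_q = w_q(1+\delta)$ with $|\delta|\le 1/L$. Along the critical curve, $t_u = x + R(x)(1/2-w_q)$ where $R(x) = (x/w_q)^{1/L}((1-x)/(1-w_q))^\alpha$, so $R(w_q)=1$ and $t_u = 1/2$ at $\delta = 0$.

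\textbf{Step 1 (Bound on $a$).} I would first show that $\tau := t_u - 1/2$ satisfies $\tau = \delta/(2L) + w_q\delta/2 + O(\delta^2/L)$ by differentiating the defining relation for $t_u$ through $R$; since $w_q L = o(1)$, the dominant contribution is $\delta/(2L)$. Thus $|\tau| = \Theta(|\delta|/L) \le O(1/L^2)$ in Case 3, and using the Taylor expansion $H(1/2 + \tau) = 2\tau^2 + \frac{4}{3}\tau^4 + \ldots$ gives $a = 2\tau^2(1+o(1)) = \delta^2/(2L^2)\cdot(1+o(1))$.

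\textbf{Step 2 (Linear coefficient of $b$ vanishes).} Because we are on the critical curve, I expect the leading behavior of $b$ near $\delta = 0$ to be quadratic, not linear. Using $b = (t_u/L)\log(x/w_q) - (1-t_u)\alpha\log((1-x)/(1-w_q)) - t_u\log(2t_u)$ (the form of $b$ valid at critical points, derived from $t_u - x = R(1/2-w_q)$), one computes directly that $(db/d\delta)|_{\delta=0} = 1/(2L) + (1/2)\alpha w_q/(1-w_q) - (dt_u/d\delta)|_{\delta=0}$, and independently that $(dt_u/d\delta)|_{\delta=0} = 1/(2L) + \alpha w_q/(2(1-w_q))$; the two contributions cancel identically, so $b'(0) = 0$. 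This cancellation is not accidental but reflects the fact that $F - \alpha = b/a$ should have a finite limit as $(t_u,t_q) \to (1/2, w_q)$ along the critical curve.

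\textbf{Step 3 (Quadratic coefficient of $b$ is $O(w_q/L)$).} Here lies the main work. I would expand $b(\delta) = \sum_j e_j\delta^j$ by writing $\tau = c_1\delta + c_2\delta^2 + \ldots$ with $c_1 = 1/(2L) + \alpha w_q/(2(1-w_q))$ and $c_2 = -\alpha(1/2-w_q)/(2L(1-w_q)^2)$ (obtained from $R''(w_q) = -\alpha/(Lw_q^2(1-w_q)^2)$). Substituting into the three terms of $b$ and collecting, the coefficient of $\delta^2$ is
\[
e_2 = \bigl[-\tfrac{1}{4L} - c_2\bigr] + \bigl[\tfrac{c_1}{L} - c_1^2\bigr] + \tfrac{\alpha w_q^2}{4} - c_1\alpha w_q + (\text{smaller}).
\]
Both bracketed pieces exhibit large cancellations: $-1/(4L) - c_2 = -1/(4L^2) + O(w_q^2/L)$ (from $c_2 \approx -\alpha/(4L)$) and $c_1/L - c_1^2 = 1/(4L^2) - \alpha^2 w_q^2/4 + O(\text{smaller})$. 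The two $\pm 1/(4L^2)$ terms annihilate, and the surviving leading term from the remaining pieces (after combining with $\alpha w_q^2/4$ and $-c_1\alpha w_q \sim -w_q/(2L)$) gives $e_2 = -w_q/(2L)\cdot(1+o(1))$.

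\textbf{Step 4 (Conclusion).} From Steps 1 and 3, $|b| \le (w_q/(2L))\delta^2(1+o(1))$ and $a \ge \delta^2/(2L^2)(1-o(1))$ throughout Case 3, so
\[
\frac{b}{a} \ge -\frac{w_q/(2L)\cdot \delta^2}{\delta^2/(2L^2)}(1+o(1)) = -w_q L (1 + o(1)) = -w_q^{1-o(1)},
\]
since $L = |\log w_q| = w_q^{-o(1)}$.

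\textbf{Main obstacle.} The delicate part is Step 3: several distinct sources contribute terms of order $1/L^2$ to $e_2$, and the claimed bound $e_2 = O(w_q/L)$ requires that these $1/L^2$ pieces all cancel against each other. Verifying this cancellation requires keeping subleading corrections to $c_1$ and $c_2$ (not just their leading behavior $1/(2L)$ and $-1/(4L)$), subleading terms in the expansion of $1/(1-w_q)$ and $\alpha = 1 - 1/L$, and the higher-order terms in $\log(1+\delta)$, $\log(1-v)$, and $\log(2t_u)$ up to $O(\delta^2)$. I would organize the bookkeeping by grouping contributions by their source (line-by-line as in the three pieces of $b$) and verifying that the only surviving order is the one inherited from the $-c_1\alpha w_q$ cross-term.
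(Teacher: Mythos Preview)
Your overall strategy---Taylor expanding $a$ and $b$ in $\delta = x/w_q - 1$ around the degenerate point $\delta = 0$---matches the paper's. Your Steps~1 and~2 are correct: in particular, your closed form $c_1 = 1/(2L) + \alpha w_q/(2(1-w_q))$ is exact (it equals the paper's $t_u'(0)$ after simplification, since $1 - 1/L - \alpha = 0$), and the linear coefficient of $b$ does vanish. Your Step~3 computation of $e_2 = -w_q/(2L)(1+o(1))$ is also right, and in fact slightly sharper than what the paper extracts.

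There is, however, a genuine gap in Step~4. You pass from ``$e_2 = -w_q/(2L)(1+o(1))$'' to ``$|b| \le (w_q/(2L))\delta^2(1+o(1))$ throughout Case~3,'' but computing $e_2$ alone does not bound $|b|$; you must also control $e_3\delta^3 + e_4\delta^4 + \cdots$. This is not innocuous here, because $e_2$ is extremely small (order $w_q/L$, not $1/L$ or $1/L^2$). A naive remainder estimate such as $|b'''| = O(1)$ would give $|e_3\delta^3| = O(1/L^3)$ at $|\delta| = 1/L$, which \emph{dominates} $|e_2\delta^2| = O(w_q/L^3)$ since $w_q L^3 \to 0$. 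In fact your rewritten form $b = (t_u/L)\log(x/w_q) - (1-t_u)\alpha\log\frac{1-x}{1-w_q} - t_u\log(2t_u)$ has first and third pieces each of size $\Theta(|\delta|/L)$, so the required bound $|b| = O(\delta^2 w_q/L)$ demands cancellations between them at \emph{every} order $j \ge 1$, not just $j = 1, 2$. (One can check that $e_3 = O(w_q/L)$ as well---the $1/(6L)$, $1/(2L^2)$, and $1/(4L^3)$ contributions again cancel between your $A$ and $C$ pieces---but you would have to keep doing this.)

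The paper sidesteps this difficulty by \emph{not} substituting the critical-curve relation into $b$. Working with the original form $b = -\alpha\log\frac{1-x}{1-w_q} + t_u\log\frac{t_u - x}{t_u(1-2w_q)}$, the paper observes that each piece is only $O(\beta w_q)$, linearizes to get $b = \alpha\beta w_q/(1-w_q) + (2w_q t_u - x)/(1-2w_q) + O(\beta^2 w_q^2)$, and then writes $2w_q t_u - x = (2c_1 - 1)w_q\beta + 2w_q\bigl(t_u - \tfrac12 - c_1\beta\bigr)$. The linear part $(2c_1 - 1)w_q\beta/(1-2w_q)$ equals $-\alpha w_q\beta/(1-w_q)$ exactly, cancelling the first term; the remainder $2w_q(t_u - \tfrac12 - c_1\beta)$ is $O(\beta^2 w_q/L)$ \emph{uniformly} by the coarse estimate $t_u = \tfrac12 + c_1\beta + O(\beta^2/L)$. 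Thus $|b| = O(\beta^2 w_q)$ in one stroke, with no order-by-order bookkeeping. Your substitution of $t_u - x = R(x)(1/2 - w_q)$ into $b$ is algebraically valid but trades two pieces of size $O(\beta w_q)$ for pieces of size $O(\beta/L)$, creating large artificial terms whose cancellation you then have to verify at every order.
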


\begin{proof}
Suppose that $x= (1+\beta) w_q$, where $|\beta| \le -\frac{1}{\log  w_q}$. We will look at $t_u$ from a more fine grained perspective.

Note that $\left(\frac{x}{w_q}\right)^{1-\alpha} = (1+\beta)^{-1/(\log  w_q)} = e^{-(\beta \pm O(\beta^2))/(\log  w_q)} = 1 - \frac{\beta}{\log  w_q} \pm O\left(\frac{\beta^2}{\log  w_q}\right)$. Moreover, $\left(\frac{1-x}{1-w_q}\right)^\alpha = \left(1 - \frac{\beta w_q}{1-w_q}\right)^{\alpha} = 1 - \frac{\beta w_q}{1-w_q} \cdot \alpha \pm O(\beta^2 w_q^2)$. Thus,
\[\left(\frac{x}{w_q}\right)^{1-\alpha} \cdot \left(\frac{1-x}{1-w_q}\right)^{\alpha} = 1 - \frac{\beta}{\log  w_q} - \frac{\beta w_q}{1-w_q} \cdot \alpha \pm O\left(\frac{\beta^2}{\log  w_q}\right).\]

Thus, we can write
\begin{align}
    t_u 
    &= w_q + \beta w_q + \left(1 - \frac{\beta}{\log  w_q} - \frac{\beta w_q}{1-w_q} \cdot \alpha \pm O\left(\frac{\beta^2}{\log  w_q}\right)\right) \cdot \left(\frac{1}{2}-w_q\right) \nonumber \\
    &= \frac{1}{2} + \beta w_q - \beta \cdot \left(\frac{1}{\log  w_q} + \frac{w_q}{1-w_q} \cdot \alpha\right) \cdot \left(\frac{1}{2}-w_q\right) \pm O\left(\frac{\beta^2}{\log  w_q}\right) \label{eq:tu-fine-grained}
\end{align}

Note that $t_u = \frac{1}{2} - \Theta(\beta/\log  w_q),$ since the other terms in \eqref{eq:tu-fine-grained} are all negligible compared to $\beta/\log  w_q$, which means that $a = \Theta(\beta^2/(\log  w_q)^2)$ by~\Cref{prop:entropy}.

To bound $b$, we will need the more fine-grained approximation of $t_u$ from \eqref{eq:tu-fine-grained}. Indeed, note that $\log  \left(\frac{1-x}{1-w_q}\right) = \log \left(1 - \frac{\beta w_q}{1-w_q}\right) = -\frac{\beta w_q}{1-w_q} \pm O(\beta^2 w_q^2),$ and $t_u\cdot\left(\log \left(1-\frac{x}{t_u}\right) + \log \left(\frac{1}{1-2w_q}\right)\right) = t_u \cdot \log \left(\frac{t_u-x}{t_u(1-2w_q)}\right) = t_u \cdot \log \left(1 + \frac{2w_q t_u - x}{t_u(1-2w_q)}\right).$ Note that $2 w_q t_u = 2 (1/2 - \Theta(\beta/\log  w_q)) w_q = w_q - \Theta(\beta \cdot w_q/\log  w_q)$. Thus, $2w_q t_u - x = \Theta(\beta \cdot w_q)$. Moreover, $t_u(1-2w_q) = \Theta(1)$. Therefore, we have that
\[t_u \cdot \log \left(1 + \frac{2w_q t_u - x}{t_u(1-2w_q)}\right) = t_u \cdot \left(\frac{2w_q t_u - x}{t_u(1-2w_q)} \pm O\left(\frac{2w_q t_u - x}{t_u(1-2w_q)}\right)^2\right) = \frac{2w_q t_u - x}{1-2w_q} \pm O(\beta^2 w_q^2).\]
Therefore,
\[b= \alpha \cdot \frac{\beta w_q}{1-w_q} + \frac{2w_q t_u - x}{1-2w_q} \pm O(\beta^2 w_q^2).\]
Using the more fine-grained approximation of $t_u$, we have that
\begin{align*}
    2w_q t_u - x 
    &= w_q + 2 \beta w_q^2 - w_q \beta \left(\frac{1}{\log  w_q} + \frac{w_q}{1-w_q} \cdot \alpha\right) \cdot \left(1 - 2w_q\right) - (w_q+\beta w_q) \pm O(\beta^2 w_q) \\
    &= (2 \beta w_q^2 - \beta w_q) - w_q \beta \left(\frac{1}{\log  w_q} + \frac{w_q}{1-w_q} \cdot \alpha\right) \cdot \left(1 - 2w_q\right) \pm O(\beta^2 w_q) \\
    &= - w_q \beta (1-2 w_q) - w_q \beta \left(\frac{1}{\log  w_q} + \frac{w_q}{1-w_q} \cdot \alpha\right) \cdot \left(1 - 2w_q\right) \pm O(\beta^2 w_q) \\
    &= w_q \beta \cdot \left(-1 - \frac{1}{\log  w_q} - \frac{w_q}{1-w_q} \cdot \alpha\right) \cdot (1-2w_q) \pm O(\beta^2 w_q)  \\ %\tag{\xnote{$+2\beta w^2_q$} \snote{No I think this is actually correct - I added an extra 2 steps of computation above to make it easier to follow}\xnote{I see}}
    &= -w_q \beta \cdot \left(\frac{\alpha}{1-w_q}\right) \cdot (1-2w_q) \pm O(\beta^2 w_q),
\end{align*}
where the last line uses the fact that $\alpha = 1 + \frac{1}{\log  w_q}$.
So, 
\[b = \alpha \cdot \frac{\beta w_q}{1-w_q} - w_q \beta \cdot \frac{\alpha}{1-w_q} \pm O(\beta^2 w_q) = \pm O(\beta^2 w_q).\]
Therefore, $\left|\frac{b}{a}\right| \le O(w_q \cdot (\log  w_q)^2) \le w_q^{1-o(1)}$.
\end{proof}

Finally we check the endpoint cases of Theorem \ref{thm:explicit_gapss_lowerbound}.
\begin{lemma}\label{lem:endpoints}
% The infimum $\inf\limits_{\substack{t_u\in[0,1], t_q \in \{0, 1\}\\ t_u\neq w_u}} F(t_u, t_q)$ for $F$ defined in Equation \eqref{eq:F} is at least $\alpha-w^{1-\log 2-o(1)}_q$ where $o(1)$ goes to $0$ as $w_q$ goes to $0$.
For $F(t_u,t_q)$ defined in Equation~\ref{eq:F}, and for $\alpha = 1 + \frac{1}{\log w_q},$ $F(t_u,0)\ge \alpha-w^{1-o(1)}_q$ where $o(1)$ goes to $0$ as $w_q$ goes to $0$, uniformly over $t_u$.
\end{lemma}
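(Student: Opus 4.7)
The plan is to substitute $t_q=0$ directly into the expression for $F(t_u,t_q)$ in Equation~\eqref{eq:F}, reduce to a one-dimensional optimization in $t_u$, and then split this into a handful of easy cases.

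First I would substitute $t_q=0$ (using the convention $0\log 0=0$) and $w_u=\tfrac12$ into \eqref{eq:F}. The contribution $-\alpha\, d(0\|w_q) = \alpha \log(1-w_q)$ appears, and the two terms $t_u\log\frac{t_u}{w_u-w_q}$ and $-t_u\log\frac{t_u}{w_u}$ combine via $\frac{w_u}{w_u-w_q} = \frac{1}{1-2w_q}$ to produce $-t_u\log(1-2w_q)$. The result is
\[
F(t_u,0) \;=\; \alpha \;+\; \frac{N(t_u)}{H(t_u)}, \qquad N(t_u) := A\,t_u - B,
\]
where $A := -\log(1-2w_q)>0$, $B := -\alpha\log(1-w_q)>0$ (both positive for $w_q$ small since $\alpha>0$), and $H(t_u)=d(t_u\|\tfrac12)$.

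Next I would locate the unique zero $t_u^* := B/A$ of the linear function $N$. Taylor expanding $\log(1-w_q)$ and $\log(1-2w_q)$ gives $B/A = \tfrac{\alpha}{2}(1 - \tfrac{w_q}{2} + O(w_q^2))$, and plugging in $\alpha = 1 + 1/\log w_q$ yields $t_u^* = \tfrac12 - \tfrac{1}{2|\log w_q|}(1 \pm o(1))$, in particular $t_u^* < \tfrac12$ for $w_q$ small. For every $t_u \ge t_u^*$ (which includes every $t_u \ge \tfrac12$), $N(t_u) \ge 0$ so $F(t_u,0) \ge \alpha$, stronger than required.

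The remaining task is to upper bound $|N(t_u)|/H(t_u) = (B - At_u)/H(t_u)$ on $[0, t_u^*)$. I would split this interval into two regimes. First, for $t_u \le \tfrac13$, Proposition~\ref{prop:entropy} gives $H(t_u) = \Omega(1)$, while $|N(t_u)| \le B = O(w_q)$, so the ratio is $O(w_q)$. Second, for $t_u$ close to $\tfrac12$, write $t_u = \tfrac12 - \delta$ with $\delta \in (\delta_0, \tfrac16)$, where $\delta_0 := \tfrac12 - t_u^* = \Theta(1/|\log w_q|)$. Then Proposition~\ref{prop:entropy} gives $H(t_u) \ge 2\delta^2$ and $|N(t_u)| = A(\delta - \delta_0)$, so the ratio is at most $\frac{A(\delta-\delta_0)}{2\delta^2}$. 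Elementary calculus shows this rational function is maximized at $\delta = 2\delta_0$, where its value is $\frac{A}{8\delta_0} = O(w_q |\log w_q|)$. Combining the two regimes yields $|N(t_u)|/H(t_u) \le O(w_q|\log w_q|) = w_q^{1-o(1)}$ uniformly in $t_u \ne \tfrac12$, so $F(t_u,0) \ge \alpha - w_q^{1-o(1)}$.

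The only delicate step is the optimization near $t_u^*$: the naive bound $|N(t_u)|/H(t_u) \le B/H(t_u^*)$ would give only $O(w_q(\log w_q)^2)$, a factor of $|\log w_q|$ too weak. It is essential to exploit the fact that the numerator $A(\delta - \delta_0)$ itself vanishes as $\delta \downarrow \delta_0$ and then to balance this against the quadratic decay of $H$; the maximizer $\delta = 2\delta_0$ is where these two effects balance and produces the correct $w_q^{1-o(1)}$ rate claimed by the lemma.
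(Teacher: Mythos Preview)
Your proof is correct and follows essentially the same route as the paper: substitute $t_q=0$ to reduce $F(t_u,0)-\alpha$ to a ratio of an affine numerator in $t_u$ over $H(t_u)=d(t_u\|1/2)$, then case-split according to the distance $|t_u-1/2|$.

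The one point worth flagging is your final paragraph. You claim the ``naive'' bound $|N(t_u)|/H(t_u)\le B/H(t_u^*)=O(w_q(\log w_q)^2)$ is ``a factor of $|\log w_q|$ too weak'' and that the sharper optimization at $\delta=2\delta_0$ is ``essential.'' This is not so: any fixed power of $|\log w_q|$ is $w_q^{-o(1)}$, so $O(w_q(\log w_q)^2)=w_q^{1-o(1)}$ already, which is exactly what the lemma asks for. In fact the paper's own proof uses precisely this cruder bound: it splits at $|t_u-1/2|\gtrless c/|\log w_q|$, observes the numerator is nonnegative in the near case, and in the far case simply bounds the numerator by $O(w_q)$ and the denominator below by $\Omega(1/(\log w_q)^2)$. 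Your argument is a bit sharper (yielding $O(w_q|\log w_q|)$), but the extra work is not needed for the stated conclusion.
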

\begin{proof}
    %We first check the case $t_q = 0$.\todo{Why is this sentence necessary?} In this case, 
    For $t_q = 0$, we can calculate that 
     \[F = \frac{t_u \log\left( \frac{t_u}{1/2 - w_q} \right) - t_u \log(2t_u) - \alpha d(t_q || w_q)}{d(t_u || 1/2)} =  \alpha + \frac{ \alpha \log(1-w_q) - t_u \log(1-2w_q)}{d(t_u || 1/2)}. \]

    Let us first consider the term $\alpha \log(1-w_q) - t_u \log(1-2w_q)$. If we were to set $t_u = 1/2$, this equals
\[\left(1 + \frac{1}{\log w_q}\right) \cdot \log(1-w_q) - \frac{1}{2} \log(1-2w_q) = \log(1-w_q) - \frac{1}{2} \log(1-2w_q) + \frac{1}{\log w_q} \cdot \log(1-w_q) = -\frac{w_q}{\log w_q} \pm O(w_q^2).\]
    For sufficiently small $w_q$, $|\log(1-2w_q)| \le 3w_q$, which means that
\[\alpha \log(1-w_q) - t_u \log(1-2w_q) = -\frac{w_q}{\log w_q} \pm O(w_q^2 + |t_u-1/2| \cdot w_q).\]

    Note that $-\frac{w_q}{\log w_q}$ is positive, since $\log w_q$ is negative. If $w_q$ is sufficiently small and $|t_u-1/2| \le -\frac{c}{\log w_q}$ for some sufficiently small constant $c$, then the term $O(w_q^2 + |t_u-1/2| \cdot w_q)$ is smaller than $-\frac{w_q}{\log w_q},$ which means $\alpha \log(1-w_q) - t_u \log(1-2w_q) \ge 0$. Because $d(t_u||1/2)$ is nonnegative, this means $F \ge \alpha \ge \alpha - w_q^{1 - o(1)}$. Alternatively, if $|t_u-1/2| \ge -\frac{c}{\log w_q},$ then we still have $\alpha \log(1-w_q) - t_u \log(1-2w_q) \ge -O(w_q^2 + |t_u-1/2| \cdot w_q) \ge -O(w_q),$ and by \Cref{prop:entropy}, $d(t_u||1/2) = H(t_u) = \Theta((t_u-1/2)^2) \ge \Omega\left(\left(\frac{1}{\log w_q}\right)^2\right)$. So, $F \ge \alpha - O\left(w_q \cdot (\log w_q)^2\right) \ge \alpha - w_q^{1-o(1)}$. So, in either case, we have that $F(t_u, 0) \ge \alpha-w_q^{1-o(1)}$, where the $o(1)$ term does not depend on $t_u$.
    %
    %Now note that one valid choice of the infimum in Lemma \ref{lm:shyam_lowerbound} is when $t_q = 0$. However, then this exactly equals the form of $F$ above. Hence, in the case $t_q = 0$, we have $F \ge \alpha -w^{1-\log 2-o(1)}_q$ where the $o(1)$ goes to $0$ as $w_q \rightarrow 0$.
    %
    % Now we check the case $t_q = 1$. Since we know $t_u \ge t_q$, we must have $t_u = 1$. Then the denominator is positive and the numerator of $F - \alpha$ equals $(1-\alpha) \log(1/w_q) - \log(2) \gg 1$ since $w_q < 0.1$ and $\alpha > 1/2$. Thus, $F \ge \alpha$ in this case. This completes the proof.
    %
\end{proof}

\section{Appendix: Omitted proofs of section 4}\label{app:upper-bound}

\begin{restatable}{lemma}{firstublemma}\label{lemma:space}
The expected space usage of~\cref{alg:preprocessing} is $O(L\ell+L k 2^{-\ell}+nk)$.
\end{restatable}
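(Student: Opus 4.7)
The plan is to split the space usage into three natural contributions and then use linearity of expectation on the only random component. Specifically, the data structure needs to store (i) the input support sets $T_1,\dots,T_k$ (equivalently the distributions $p_1,\dots,p_k$), (ii) the sampled subsets $S_1,\dots,S_L$, and (iii) the lists $A_1,\dots,A_L$. Contribution (i) costs $O(nk)$ because each $T_j$ has size $n/2$, and contribution (ii) costs $O(L\ell)$ by construction. These give the first and last terms in the stated bound. The real work is to show $\mathbb{E}\bigl[\sum_{i=1}^{L}|A_i|\bigr]=O(Lk2^{-\ell})$.

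For this, fix indices $i\in[L]$ and $j\in[k]$. Since $T_j$ is a deterministic subset of $[n]$ of size $n/2$ and $S_i$ is an independent uniformly random $\ell$-subset of $[n]$, we have
\[
\Pr[S_i\subset T_j]\;=\;\frac{\binom{n/2}{\ell}}{\binom{n}{\ell}}\;=\;\prod_{t=0}^{\ell-1}\frac{n/2-t}{n-t}\;\le\;2^{-\ell},
\]
where the last inequality holds termwise since $(n/2-t)/(n-t)\le 1/2$ for every $t\ge 0$ (under the mild assumption $\ell\le n/2$, which is needed even for $S_i\subset T_j$ to be possible). Summing over $j\in[k]$ gives $\mathbb{E}[|A_i|]\le k2^{-\ell}$, and summing over $i\in[L]$ yields $\mathbb{E}\bigl[\sum_i|A_i|\bigr]\le Lk2^{-\ell}$.

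Adding the three contributions gives the total expected space $O(nk+L\ell+Lk2^{-\ell})$, matching the claim. No step here is a genuine obstacle; the only thing to be careful about is making the $\Pr[S_i\subset T_j]\le 2^{-\ell}$ bound rigorous via the product expansion, rather than invoking intuition about each element of $S_i$ being independently in $T_j$ (which would only be exact with replacement).
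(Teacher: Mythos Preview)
Your proof is correct and follows essentially the same approach as the paper: decomposing the space into storing the $S_i$'s, the $A_i$'s, and the $T_j$'s, and bounding $\mathbb{E}[|A_i|]$ via $\Pr[S_i\subset T_j]\le 2^{-\ell}$. Your hypergeometric product expansion is slightly more explicit than the paper's one-line assertion of the $2^{-\ell}$ bound, but the argument is otherwise identical.
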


\begin{proof}
The algorithm has to store each of the $L$ sets $S_i$ which requires space $O(L\ell)$. Furthermore, it needs to store the sets $A_i$ which each have expected size $O(2^{-\ell}k)$. Indeed, the probability that a random subset of size $\ell$ is contained in a given $T_j$ is at most $2^{-\ell}$. Finally, the algorithm needs to store the sets $T_j$ which takes $O(nk)$ space. 
\end{proof}

\begin{restatable}{lemma}{secondublemma}\label{lemma:query}
The expected query time of~\cref{alg:query} is $O(L\ell+\frac{k}{\varepsilon}(1-\eps/2)^\ell+\frac{n}{s})$%\todo{I think there's a way to replace the multiplication by $n/s$ by an addition of $n/s$. In fact, I think we can check if $Q\subset T_j$ in expected constant time by just sampling random elements of $Q$. I have used this bound below.}
\end{restatable}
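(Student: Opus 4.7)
The plan is to decompose the query time into three contributions matching the three terms of the bound. First, forming $Q$ from the $n/s$ samples is a linear scan and contributes $O(n/s)$. Second, the outer loop over $i=1,\dots,L$ checks whether $S_i\subset Q$; each such check takes $O(\ell)$ time, giving the $O(L\ell)$ term. The substantive part is bounding the expected work done inside the bodies where $S_i\subset Q$, which should yield the $O\!\left(\tfrac{k}{\eps}(1-\eps/2)^{\ell}\right)$ term.

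For this third part, the key observation is that, because $Q\subseteq T_{i^*}$, whenever $S_i\subset Q$ we automatically have $i^*\in A_i$, so the algorithm will return $p_{i^*}$ upon reaching the correct index within that loop iteration (with probability at least $1-o(1)$ by the choice $|U_j|=C\log n/\eps$). Consequently, the algorithm does nontrivial work in the inner loop only for the \emph{first} $i_0$ with $S_{i_0}\subset Q$. So I will condition on $S_{i_0}$ being a uniformly random $\ell$-subset of $Q$ (which follows because the $S_i$ are independent uniform $\ell$-subsets of $[n]$ and we condition on the containment event). Then I need to bound the expected work over $j\in A_{i_0}$.

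The per-$j$ analysis splits into two cases. For $j=i^*$, the inner while-loop runs at most $C\log n/\eps$ steps, contributing $O(\log n/\eps)$, which is absorbed into $O(n/s)$ since $n/s\gtrsim\log k/\eps^2$. For $j\neq i^*$, since $\|p_{i^*}-p_j\|_1\geq\eps$ and both distributions are half-uniform, $|T_{i^*}\cap T_j|\leq (1-\eps/2)|T_{i^*}|$; by a standard Chernoff plus union bound over $j$ (valid because $|Q|\gtrsim\log k/\eps^2$) we get $|Q\cap T_j|\leq (1-\eps/2+o(1))|Q|$ for all $j\neq i^*$ w.h.p. Thus a single uniformly random element of $Q$ lies in $T_j$ with probability at most $1-\eps/2$, so the number of samples drawn before one fails to lie in $T_j$ is dominated by a Geometric$(\eps/2)$, giving expected work $O(1/\eps)$ per wrong $j$.

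It remains to bound $\mathbb{E}[|A_{i_0}\setminus\{i^*\}|]$. Conditional on $S_{i_0}$ being a uniform $\ell$-subset of $Q$, for each $j\neq i^*$ we have
\[
\Pr[j\in A_{i_0}] \;=\; \Pr[S_{i_0}\subset T_j\cap Q] \;=\; \binom{|Q\cap T_j|}{\ell}\Big/\binom{|Q|}{\ell} \;\leq\; \bigl(1-\eps/2+o(1)\bigr)^{\ell},
\]
so by linearity $\mathbb{E}[|A_{i_0}\setminus\{i^*\}|]\leq k(1-\eps/2)^{\ell}(1+o(1))$. Multiplying by the $O(1/\eps)$ per-$j$ cost gives $O\!\left(\tfrac{k}{\eps}(1-\eps/2)^{\ell}\right)$, completing the bound. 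The main subtlety is making the conditioning on $S_{i_0}$ rigorous and ensuring the concentration of $|Q\cap T_j|/|Q|$ holds uniformly over $j$; the hypothesis $n/s\geq C\log k/\eps^2$ is exactly what makes the union bound go through.
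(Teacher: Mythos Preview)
Your decomposition into the three terms and your per-$j$ cost analysis (geometric tail once $|Q\cap T_j|/|Q|\le 1-\Omega(\eps)$) match the paper's proof exactly. The one point where you diverge is in bounding $\Pr[j\in A_{i_0}]$. You condition on $Q$ and use concentration of $|Q\cap T_j|/|Q|$, obtaining $(1-\eps/2+o(1))^{\ell}$. The paper instead exploits symmetry: since the query samples are drawn \emph{uniformly} from $T_{i^*}$, every $\ell$-subset of $T_{i^*}$ is equally likely to be contained in $Q$, so conditioned on $S_{i_0}\subset Q$ the set $S_{i_0}$ is uniform over $\ell$-subsets of $T_{i^*}$ (not merely of $Q$). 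This gives directly
\[
\Pr[S_{i_0}\subset T_j\mid S_{i_0}\subset Q]\;\le\;\left(\frac{|T_j\cap T_{i^*}|}{|T_{i^*}|}\right)^{\ell}\;\le\;(1-\eps/2)^{\ell},
\]
with no concentration step and no $o(1)$ slack.

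This matters because your absorption of the slack, $(1-\eps/2+o(1))^{\ell}\le(1-\eps/2)^{\ell}(1+o(1))$, requires the concentration error to be $o(1/\ell)$. Under only $n/s\ge C\log k/\eps^2$, the Chernoff deviation in $|Q\cap T_j|/|Q|$ is of order $\sqrt{\log k/|Q|}=O(\eps/\sqrt{C})$, which is a constant (not $o(1/\log k)$) for fixed $C$, so the $(1+o(1))^{\ell}$ factor is not controlled. The paper's symmetry observation sidesteps this issue entirely; you can adopt it with a one-line change.
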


\begin{proof}
First, the algorithm forms the set $Q$ which takes $O(n/s)$ time. Then, the algorithm goes over the $L$ sets $S_i$ until it finds an $i$ such that $S_i\subset Q$. This takes time $O(L\ell)$. Next, the algorithm goes through the indices $j\in A_i$. For each such $j$, it samples the set $U_j$ one element at a time checking if $U_j\subset T_j$. 
Let us first bound the expected size of $A_i$. We clearly have that $i^*\in A_i$. Indeed, $S_i \subset Q\subset T_{i^*}$ and $A_i$ lists all the $j$ such that $S_i\subset T_j$. Now for a $j\in [k]$ with $j\neq i^*$, the assumption that $\|p_j-p_{i^*}\|_1>\eps$, gives that $|T_j\cap T_{i^*}|\leq n(\frac{1}{2}-\frac{\eps}{4})$. As the sampling of $S_1,\dots, S_L$ and $Q$ are independent, we can view $S_i$ as a random size-$\ell$ subset of $T_{i^*}$. In particular, the probability that $S_i\subset T_j$ can be upper bounded by
\[
\left(\frac{|T_j\cap T_{i^*}|}{|T_{i^*}|}\right)^{\ell}\leq \left(\frac{n\left(1/2-\eps/4\right)}{n/2}\right)^\ell=(1-\eps/2)^\ell%\leq e^{-\eps\ell/2},
\]
and so, the expected size of $|A_i|$ is at most $k(1-\eps/2)^\ell$. Finally, using the assumption that $\|p_{i^*}-p_j\|_1 > \eps$ for $j\neq i^*$, and that $n/s\gg(\log k)/\eps^2$, by a standard concentration bound, it holds for any $j\neq i^*$ that $|Q\cap T_j|\leq |Q|(1-\eps/4)$ with high probability in $k$. In particular, for $j\neq i^*$, we only expect to include $O(1/\eps)$ samples 
% \annote{It should be $O(1/\eps)$. Because for each sample, the probability of detecting $U_j\subsetneq T_j$ is $\Omega(\eps)$. So you  repeat $1/\eps$ times before you detect it with constant probability.}
in $U_j$ before observing that $U_j\subsetneq T_j$.
In conclusion, the expected query time is $O(\frac{n}{s}+L\ell+\frac{k}{\eps}(1-\eps/2)^\ell)$, as desired.
\end{proof}

\begin{restatable}{lemma}{thirdublemma}\label{lemma:succeed}
    Let $L=C\left(\frac{2}{1-e^{-2/s}}\right)^\ell$ for a sufficiently large constant $C$. Assume that $L=k^{O(1)}$. Then~\cref{alg:query} returns $i^*$ with probability at least $0.99$.
\end{restatable}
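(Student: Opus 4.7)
The plan is to decompose the success event into two independent failure modes: (F1) none of the $L$ seed sets $S_i$ are contained in the query set $Q$ (so the algorithm has no chance to examine $i^*$), and (F2) some $S_i \subset Q$ is found but the inner loop prematurely returns some wrong $j \neq i^*$. I will show each failure probability is at most $0.005$ for $C$ large enough.

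For (F1), I would first analyze the distribution of $|Q|$. Each fixed element $x \in T_{i^*}$ is sampled at least once among the $n/s$ i.i.d.\ draws with probability $1-(1-2/n)^{n/s} = 1 - e^{-2/s} \pm o(1)$, and these indicators are negatively associated across $x$. A Chernoff/negative-association bound combined with the assumption $n/s \ge C\log k /\eps^2$ gives $|Q| \ge \tfrac{n}{2}(1-e^{-2/s})(1-o(1))$ with high probability in $k$. Conditioned on $Q$, the seed $S_i$ is a uniform size-$\ell$ subset of $[n]$, so $\Pr[S_i \subset Q \mid Q] = \binom{|Q|}{\ell}/\binom{n}{\ell} \ge (1-o(1))\left(\tfrac{1-e^{-2/s}}{2}\right)^\ell =: p$. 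Since the $S_i$ are drawn independently, the probability no $S_i$ lies in $Q$ is at most $(1-p)^L \le e^{-Lp} \le e^{-C(1-o(1))}$, which is below $0.005$ for $C$ large.

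For (F2), observe first that $i^* \in A_i$ for every $i$ with $S_i \subset Q$, since $Q \subset T_{i^*}$ forces $S_i \subset T_{i^*}$. Thus, if no $j \neq i^*$ inside any $A_i$ passes the sample test, the loop must reach $j = i^*$ (for which $U_{i^*} \subset Q \subset T_{i^*}$ holds deterministically) and return $p_{i^*}$. For $j \neq i^*$, the hypothesis $\|p_{i^*}-p_j\|_1 \ge \eps$ yields $|T_j \cap T_{i^*}| \le n(1/2 - \eps/4)$, and by the concentration bound already used in the proof of~\cref{lemma:query}, $|Q \cap T_j| \le (1-\eps/4)|Q|$ w.h.p.\ in $k$. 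Sampling $U_j$ of size $C\log n/\eps$ from $Q$ therefore satisfies $U_j \subset T_j$ with probability at most $(1-\eps/4)^{C\log n/\eps} \le n^{-C/4}$. A union bound over at most $L$ choices of $i$ and at most $k$ choices of $j \in A_i$ gives total failure probability $\le Lk \cdot n^{-C/4} = n^{-C/4 + O(1)}$, using $L = k^{O(1)}$ and $k = n^{O(1)}$. For $C$ sufficiently large this is at most $0.005$.

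The only technical subtlety I anticipate is justifying that the (correlated) lower bound on $|Q|$ and the (correlated) upper bound on $|Q \cap T_j|$ both hold simultaneously with high probability; this is routine using negative association of the ``element $x$ sampled at least once'' indicators (which are monotone functions of the Poisson/multinomial sample counts), together with a union bound over $j$. Combining the two failure bounds gives total failure probability at most $0.01$, proving the lemma.
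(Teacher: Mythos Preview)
Your proof is correct and follows essentially the same two-failure-mode decomposition as the paper: (F1) no seed set $S_i$ lands inside $Q$, handled by a concentration bound on $|Q|$ and independence of the $S_i$; and (F2) a wrong $j$ passes the inner sampling test, handled by the $|Q\cap T_j|\le(1-\eps/4)|Q|$ bound and a union bound. The only differences are cosmetic: you invoke negative association where the paper invokes Azuma, and your union bound in (F2) is over $Lk$ pairs whereas the paper only union-bounds over $k$ (since the first $i$ with $S_i\subset Q$ already contains $i^*$, at most $k$ wrong candidates are ever tested before the algorithm returns); both work given $L=k^{O(1)}$.
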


\begin{proof}
It is readily checked that $\ell=O(\log k)=O(\log n)$, and further, for $s>10$, it holds that $\ell=O(\frac{\log k}{\log s})=O(\frac{\log n}{\log s})$. We record this for later use.

Note that by standard concentration bounds, it holds with high probability in $k$ that $Q\subset T_{j}$ only for $j=i^*$. In fact, as in the previous proof, for all $j\neq i^*$, $|Q\cap T_j|\leq |Q|(1-\eps/4)$ with high probability in $k$. In particular, this implies that the algorithm with high probability never returns a $j\neq i^*$. Indeed, by a union bound, the probability of this happening is at most
\[
k\Pr[U_j\subset T_j]\leq k(1-\eps/4)^{|U_j|}=k(1-\eps/4)^{C\log n/\eps}\leq kn^{-C/4}\leq n^{-10},
\]
where we used that $k$ and $n$ are polynomially related and $C$ is sufficiently large.
In order for the algorithm to succeed, it therefore suffices to show that there exists an $i\in [L]$ such that $S_i\subset Q$. In that case, the algorithm will indeed return $p_{i^*}$ with high probability. 

The expected size of $Q$ is
\[
\mathbb{E}[|Q|]=\frac{n}{2}\left(1-\left(1-\frac{2}{n}\right)^{n/s}\right)\geq \frac{n}{2}\left(1-e^{-2/s}\right)
\]
and by a standard application of Azuma's inequality, it holds with high probability in $n$ that 
\begin{align}\label{eq:Qsize}
|Q|= \frac{n}{2}\left(1-e^{-2/s}\right)-O((n/s)^{0.51}).
\end{align}

Note that the probability that a single set $S_i$ is contained in a fixed set $Q$ is
\begin{align}\label{eq:prob_of_contain}
\prod_{i=0}^{\ell-1}\left(\frac{|Q|-i}{n-i}\right)\geq \left(\frac{|Q|-\ell}{n}\right)^\ell.
\end{align}
Using the bounds on $\ell$ in the beginning of the proof, we find that $\ell\ll O( (n/s)^{0.51}$).  In particular, conditioning on the high probability event of~\cref{eq:Qsize}, the probability in~\cref{eq:prob_of_contain} is at least,
\[
\left(\frac{1-e^{-2/s}}{2}-O\left(\frac{1}{n^{0.49}s^{0.51}}\right)\right)^\ell\geq c\left(\frac{1-e^{-2/s}}{2}\right)^\ell,
\]
for some constant $c>0$.
The probability that no $S_i$ is contained in $Q$ is at most 
\[
\left(1-c\left(\frac{1-e^{-2/s}}{2}\right)^\ell\right)^L.
\]
We thus choose $L=\frac{5}{c}\left(\frac{2}{1-e^{-2/s}}\right)^\ell$ to ensure that this probability is at most $e^{-5}<1/100$ and the result follows.  
\end{proof}

We can now prove our main theorem of Section \ref{sec:algorithm}.

\mainupper*
\begin{proof}[Proof of~\cref{thm:main-upper}]
Let us pick $L=k^{\rho_u}$. We further choose $\ell$ such that $L=C\left(\frac{2}{1-e^{-2/s}}\right)^\ell$ for some sufficiently large constant $C$ as in~\cref{lemma:succeed}. In particular, $\ell\leq \rho_u\lg(k)$, so we obtain that the space bound from~\cref{lemma:space} is $O(k^{1+\rho_u}+nk)$. On the other hand, the bound on the query time in~\cref{lemma:query} is
\[O\left(k^{\rho_u}\log k+\frac{k}{\eps}(1-\eps/2)^\ell+n/s\right)\\
=O\left(k^{\rho_u} \log k+\frac{1}{\eps}k^{1+\rho_u\log(1-\frac{\eps}{2})/{\log\left(\frac{2}{1-e^{-2/s}}\right)}}+n/s\right),\]
as desired. Finally, by the choice of $L$ and $\ell$, it follows by~\cref{lemma:succeed} that the algorithm returns $i^*$ with probability at least $0.99$. 

Note that while vastly simplified from the expression of \cref{thm:supermajority_lowerbound} from~\cite{ahle2020subsets}, the expression for the query time in~\cref{thm:main-upper} is unwieldy. For a simplified version of the theorem, one can note that $\log(1-\frac{\eps}{2})\leq -\eps/2$ and for $s\geq 2$, we have that $\frac{2}{1-e^{-2/s}}\leq 2s$, resulting in the claimed bound. 
\end{proof}

\begin{remark}
\cref{thm:main-upper} is stated for the \emph{promise} problem of~\cref{def:half-uniform-problem} where all distributions $p\in P$ with $p\neq p_{i^*}$ have $\|p-p_{i^*}\|_1\geq \eps$. However, even if this condition is not met, we can still guarantee to return a distribution $p_j$ such that $\|p_j-p^{i_*}\|\leq \eps$ with probability 0.99 and only a slight increase in the query time. Indeed, as in the proof of~\cref{lemma:succeed} as long as there exists an $S_i\subset Q$, the list $A_i$ will contain $p_{i^*}$. Moreover, as in the proof of that lemma, the algorithm will never return a $p_j$ with $\|p_j-p_{i^*}\|> \eps$. Thus it will either return $p_{i^*}$ when it encounters it in the list $A_i$, or a distribution $p_j$ with $\|p_j-p_{i^*}\|\leq \eps$. The only difference is that now the bound on the number of samples included in $U_j$ in~\cref{lemma:query} becomes $O(\log n/\eps)$ instead of $O(1/\eps)$ with a corresponding blow up by a $\log n$ factor in the query time. 
% \xnote{is it true that our algorithm can be simply extended to uniform distributions with $w_u\neq 1/2$ and variable support size? I think perhaps do we actually only need a lower bound on the minimal support size? if so should we put this potential extension in this remark as well} 
% \annote{No unfortunately not. With variable support size, two distributions could be $\eps$ far apart with one support size contained in the other. If you sample from the distribution with small support size, you might accidentally return the one with large support size. In fact, I don't think we even know of any way to handle variable support size. }
\end{remark}

\section{Remark about experimental setting}
\begin{remark}\label{rem:baselines}
  The prior work of \cite{aamand2023data} also tested their ``FastTournament'' algorithm on random half-uniform distributions. That algorithm works for the general problem and not just flat distributions, but its generality makes it is much less efficient for the special case we consider.
From their experiments, in the setting where $k=8192$, $n=500$, and $S=50$, the best setting of parameters achieves less than 98\% accuracy using more than $400000$ operations (their notion of operation corresponds to querying the probability mass at a specific index for two distributions while our notion of operation corresponds to querying whether a specific index is in the support of a single distribution/sample).
For a comparable setting of $k=10000$, $n=500$, $S=50$, our algorithm uses fewer then $20000$ operations, more than a $20\times$ improvement.
In addition to having much better query time than the general algorithm, our algorithm has subquadratic preprocessing time of $O(k L \ell)$ while the tournament-based algorithm requires $O(k^2 n)$ time which is prohibitive for the parameter settings we test.
For these reasons, we restrict our comparisons to our Subset algorithm and the Elimination algorithm baseline, both tailored for the half-uniform setting.  
\end{remark}

\end{document}